\theoremstyle{plain}
\newtheorem{theorem}{Theorem}
\newtheorem{lemma}         [theorem]{Lemma}
\newtheorem{claim}         [theorem]{Claim}
\newcommand{\shortOnly}[1]{\ifthenelse{\boolean{short}}{#1}{}}
\newcommand{\onlyShort}[1]{\ifthenelse{\boolean{short}}{#1}{}}
\newcommand{\longOnly}[1]{\ifthenelse{\boolean{short}}{}{#1}}
\newcommand{\onlyLong}[1]{\ifthenelse{\boolean{short}}{}{#1}}
\newcommand{\G}{\mathcal{G}}
\newcommand{\C}{\mathcal{C}}
\newcommand{\A}{\mathcal{A}}
\newcommand{\V}{\mathcal{V}}
\renewcommand{\L}{\mathcal{L}}
\newcommand{\bgt}{\texttt{budget}}
\newcommand{\ch}{{\texttt{C}}}
\renewcommand{\t}{{\texttt{T}}}
\newcommand{\ccg}{{\textsc{CML}}}
\newcommand{\cml}{{\textsc{CML}}}
\newcommand{\plog}{\text{polylog}}
\newcommand{\poly}{\text{polynomial}}
\newcommand{\Pe}{P2P }
\newcommand{\E}[1]{{\rm I\kern-.3em E}\left [#1 \right ]}
\renewcommand{\r}{\rho}
\renewcommand{\c}{\gamma}
\newcommand{\red}[1]{{\textcolor{red}{#1}}}
\newcommand{\blue}[1]{\textcolor{blue}{#1}}
\newcommand{\magenta}[1]{\textcolor{magenta}{#1}}
\def\ShowComment{true}
\def\john#1{\marginpar{$\leftarrow$\fbox{J}} \magenta{ $\bigstar$}\footnote{$\Rightarrow$~{\sf #1 \magenta{--John}}}}
\def\sumathi#1{\marginpar{$\leftarrow$\fbox{S}}\blue{ $\bigstar$}\footnote{$\Rightarrow$~{\sf #1 \blue{--Sumathi}}}}
\def\reviewer1#1{\marginpar{$\leftarrow$\fbox{R1}}\red{ $\bigstar$}\footnote{$\Rightarrow$~{\sf #1 \red{--reviewer1}}}}
\def\john#1{}
\def\sumathi#1{}
\def\reviewer1#1{}
\title{Spartan: Sparse Robust Addressable Networks\thanks{tThis research was supported in part by an Extra-Mural Research Grant (file number EMR/2016/003016) funded by the Science and Engineering Research Board, Department of Science and Technology (SERB), Government of India. The first author is also supported by an SERB MATRICS project (file number MTR/2018/001198) and an Indo-German joint project supported by DST and DAAD (INT/FRG/DAAD/P-25/2018).} \thanks{Preliminary results of this paper appeared in the Proceedings of the 32nd IEEE International Parallel \& 
		 Distributed Processing Symposium (IPDPS), 2018~\cite{AS18}}}
\author{
 John Augustine \\
  Computer Science and Engineering\\
    Indian Institute of Technology Madras\\
    Chennai, India \\
  \texttt{augustine@cse.iitm.ac.in} \\
   \And
Sumathi Sivasubramaniam\\
Computer Science and Engineering \\
 Indian Institute of Technology \\
 Chennai, India\\
  \texttt{sumathi@cse.iitm.ac.in} \\
  \And
}
\begin{document}
\maketitle
\begin{abstract}
A Peer-to-Peer (P2P) network is  a dynamic collection of nodes that connect with each other via  virtual overlay links built upon an underlying network (usually, the Internet). P2P networks are  highly dynamic and can experience very heavy churn, i.e., a large number of nodes join/leave the network continuously. Thus, building and maintaining a stable overlay network  is  an important problem that has been studied extensively for two decades.

In this paper, we present our \Pe overlay network called Sparse Robust Addressable Network (Spartan). Spartan can be quickly and efficiently built in a fully distributed fashion within $O(\log n)$ rounds. Furthermore, the Spartan overlay structure  can be  maintained, again, in a  fully distributed manner despite adversarially controlled churn (i.e., nodes joining and leaving) and significant variation in the number of nodes. Moreover, new nodes can join a committee within $O(1)$ rounds and leaving nodes can leave without any notice.

The number of nodes in the network  lies in $[n, fn]$ for any fixed $f\ge 1$. Up to $\epsilon n$ nodes  (for some small but fixed $\epsilon > 0$) can be adversarially added/deleted within {\em any} period of $P$ rounds for some $P \in O(\log \log  n)$. Despite such uncertainty in the network, Spartan maintains $\Theta(n/\log n)$ committees that are stable and addressable collections of $\Theta(\log n)$ nodes each for $\poly(n)$ rounds with high probability. 

Spartan's committees are also capable of performing sustained computation and passing messages between each other. Thus, any protocol designed for static networks can be simulated on Spartan with minimal overhead. This makes Spartan an ideal platform for developing applications. We experimentally show that Spartan will remain robust as long as each committee, on average, contains 24 nodes for networks of size up to $10240$. 
\end{abstract}

\keywords{Overlay networks\and Peer-to-Peer networks\and dynamic networks\and churn.}


\section{Introduction}

Peer-to-Peer (P2P) networks have become increasingly popular over the years, with applications such as Bitcoin,  Bittorrent, Cloudmark, etc. becoming  widespread in use. Such distributed and decentralized methods of networking facilitate resource  sharing and decentralized applications that are resilient and scalable.  In addition to the typical P2P context where end-users clients are connected to form the overlay, P2P principles have also influenced dedicated overlay infrastructure provided by commercial enterprises like Akamai~\cite{:2014aa}. 

\Pe networks are typically virtual networks --- called {\em overlay} networks ---  built on top of an underlying (often physical) network --- called the \textit{underlay}. A simple schematic is shown in Figure~\ref{fig:intro-schematic}. The underlay network is typically the Internet and only a relatively small (and dynamically changing) fraction of the nodes in the underlay network may choose to participate in the overlay P2P network. The underlay allows us to create  virtual edges in the network (e.g., TCP connections) that can serve as the basis for virtual overlay edges. In our work, our goal is to design a semi-structured overlay that provides stable addressable locations also called a distributed hash table (DHT) that facilitates storage and retrieval of data items. 

\begin{figure}[hptb]
\begin{center}
\includegraphics[scale=0.60, clip=true,trim=10 185 100 120]{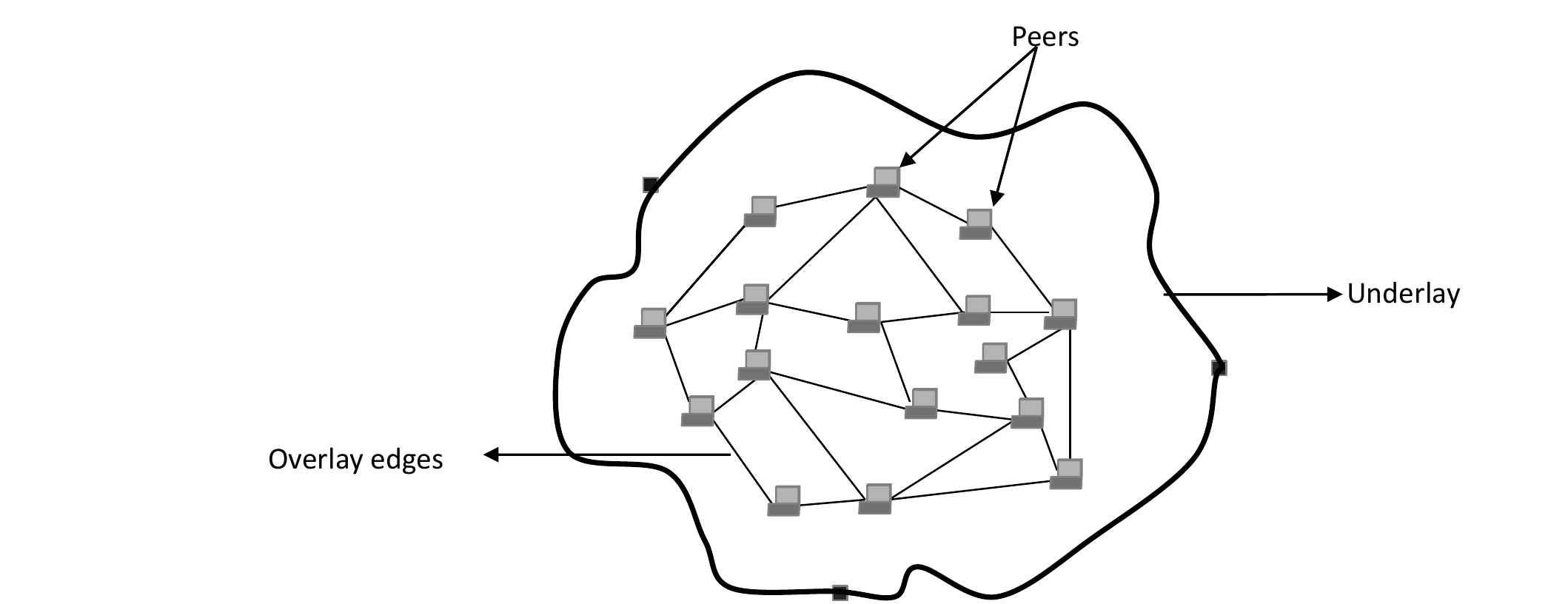}
  \end{center}
  \caption{General Structure of a \Pe network}\label{fig:intro-schematic}
\end{figure}

 
The challenge lies in the fact that \Pe networks tend to be highly dynamic and experience heavy churn, i.e., a large number of peer nodes join and leave the network at a rapid pace \cite{FPJKA07, GSG02, SMKKB01}.  In fact, real world studies of \Pe networks \cite{SR06} show that as much as fifty percent of a network can be replaced within a hour. This makes searching and indexing of data items notoriously difficult  because unpredictable churn can destroy stored data items. 
 
The ability to store, index, search, and retrieve data items addressed by keys is crucial in many P2P applications like BitTorrent, Bitcoin, and a host of other DHT implementations. Quite naturally, there has been a significant amount of work in building overlay structures that allow for efficient look-up of data items in \Pe networks.  Many overlay structures like Chord~\cite{SMKKB01}, CAN~\cite{RFHKS01}, Pastry~\cite{RD01}, Tapestry~\cite{ZKJ01}, and  Skip Net~\cite{HJSTW03} have been proposed.   However one problem that is persistent in all of these works is the lack of mathematical guarantee against heavy churn, i.e., up to $\Omega(n/\plog(n))$ nodes joining and leaving in every round. We refer to \cite{LCPSL05, M15} as good surveys of available research in this topic.

Recent works that do provide guarantees in the face of heavy churn, such as~\cite{AMMPRU13} make assumptions about the network topology  such as the network having good expansion properties. Research has shown that expander topologies may provide the right framework for solving several different problems in \Pe networks such as consensus~\cite{APRU12}, leader election~\cite{AKS15}, and even storage and search~\cite{AMMPRU13} for  a single data item. The good news is that we now have ways to build and maintain overlays with good expansion properties despite heavy churn~\cite{APRRU15}. 
However,  such expander  topologies  tend to be unstructured and  prove to be inefficient for  storing, searching, and retrieving data items, which is a key requirement for DHT applications.

\subsection{Model Characteristics} 
We present a robust overlay network called {\em Sparse Robust Addressable Network (Spartan)} that allows for the storage and look up of data items in \Pe networks. The number of nodes may vary over time, but for ease of exposition, we assume that the number of nodes is in the range $[n, fn]$ for some constant $f \ge 1$. As the name implies, it is a sparse network with maximum degree bounded by $O(\log n)$.  Our model brings together a wide range of modeling ideas found in several prior works \cite{FS02,KSW10,AS09,AMMPRU13,APRRU15,DGS16} that are quite disparate with respect to their modeling approaches. Our hope is that our model will be a step towards a unified  approach to modeling this complex problem. We will highlight a few salient model characteristics here; a detailed description is provided in Section~\ref{sec:model}. 

Our network is provided by an almost adaptive adversary that knows the network and the details of Spartan up to a small $O(\log \log n)$ steps prior to the current round. This adversary moreover can design the network with high levels of dynamism, some aspects of which are heretofore unseen in the literature. First and foremost, the adversary can design the network with {\em heavy} and {\em erratic churn}, whereby, it is allowed to add/delete nodes  at the rate of up to  $\epsilon n$ adds/deletes within any time period $P \in O(\log\log n)$ for some fixed $\epsilon >0$. Such a  heavy churn rate means that the adversary  can replace all the nodes  in $O(\log \log n)$ rounds. To make matters worse, the churn can be erratic in the sense that there can be sudden spikes with up to $\epsilon n$ nodes added or deleted within $O(1)$ rounds (as long as it is preceded and succeeded by relative calm). Such heavy, erratic, and adversarial  nature of churn makes the maintenance of a robust overlay network a difficult challenge to overcome.  

Moreover, the degree of each node is at most $O(\log n)$. Each node that enters the network (after the first $\Theta(\log n)$ rounds called the bootstrap phase) is adversarially connected to one pre-existing node. To maintain a connected overlay in this setting will require a nuanced approach. To see the challenge, consider the following scenario in which there is a pre-existing node $v$ and a new node $u_0$  is connected to $v$  in some round $r$. In round $r+1$, $u_0$ is also a pre-existing node. So, $u_1$  enters in round $r+1$ and is connected to $u_0$, while $v$ is churned out. This continues with a new node $u_i$ in each round $r+i$ connected to $u_{i-1}$, while $u_{i-1}$ is churned out. In this chain, if a single node somehow fails to get connected properly to the overlay network, then all subsequent nodes will continue to be disconnected. In fact, in our model, up to $O(\log n)$ new nodes can be simultaneously connected to the same  pre-existing node. This means that a disconnected node could affect an exponential growing number of nodes as the rounds progress.

\subsection{Our Results and Technical Contributions}

Spartan is a framework that allows building and maintaining P2P overlay networks that are robust against heavy adversarial churn. 
We begin with the  Spartan architecture (see Section~\ref{sec:spartan}) and then describe procedures to build and maintain Spartan (see Section~\ref{sec:warmup} and Section~\ref{sec:maintain}).   With high probability (whp)\footnote{We say that an event occurs with high probability if, under suitable choice of constants used by algorithms, the probability of its occurrence is at least $1-\frac{1}{n^c}$, for any fixed  $c>0$.}, Spartan possesses the following features.

\begin{compactenum}
\item It provides  $\Theta(\frac{n}{\log n})$ virtually addressable locations (called {\em committees}). These  committees are collections of $\Theta(\log n)$ nodes connected together completely by overlay edges. See line number~\ref{lno:join} of Algorithm~\ref{alg:warmup-overview}, Lemma~\ref{lem:comsize} and Lemma~\ref{lem:committes} for more details.
\item These committees are in turn connected together by means of {\em logical edges}. Each logical edge between two committees is a collection of complete bipartite overlay edges connecting nodes from either committees. The committees and the logical edges put together form a {\em logical network}. See line numbers~\ref{lno:formccg} and  \ref{lno:formedges} of Algorithm~\ref{alg:warmup-overview}.
\item For concreteness, we have constructed the logical network to be a butterfly network with the committees as butterfly nodes and logical edges as butterfly edges. This provides us with the ability to  address each committee by its location in the butterfly structure. See Lemma~\ref{lem:butterfly}.
\item Every node in Spartan is within a distance of $O(\log n)$ of every committee in the network (as seen in lemma~\ref{lem:butterfly}). Moreover, there exists a simple bit-fixing algorithm by which it is easy to reach any required committee in $O(\log n)$ rounds (see Chapter 4 in~\cite{MU05}). 
\item  We show how to build and maintain Spartan so that it is provably robust against  heavy and erratic adversarial churn described earlier. In particular, we prove that the Spartan network survives (whp) for a number of rounds that is a large polynomial in $n$. See Section~\ref{sec:maintain} and Theorem~\ref{thm:main} for a full explanation. 
\item We show that an algorithm designed under a suitably defined CONGEST model of distributed computing for static networks can be efficiently simulated by the committees of Spartan. Therefore a host of applications and algorithms developed under the assumption that the network is static can be extended to dynamic networks. See Section~\ref{sec:simulation} for a full description.
\end{compactenum}


The key to Spartan's robustness lies in the strength of  committees, which are collections of $\Theta(\log n)$ nodes that work together to ensure persistence of activities (like storing, searching, and retrieving data required for building stable applications) despite churn. These committees are built to be self sustaining. Our procedures ensure that the committees stay  robust in the face of heavy and erratic adversarial churn.  Consequently, the network as a whole is robust. The notion of committees is not new. In fact, variations of our notion of committees have been employed in building overlays for over 15 years (for example, see Fiat and Saia~\cite{FS02}). Over the years, we have seen quite a few variations spanning both deterministic~\cite{KSW10} and randomized overlay networks~\cite{AMMPRU13,DGS16}. However, our design allows us to provide a  DHT solution that can be constructed quickly. Moreover, to the best of our knowledge, ours is the first known DHT construction that is provably capable of handling such heavy churn. Most importantly, we believe the conceptual core is very simple and easy to implement and provides a lot of flexibility for fine-tuning to meet specific requirements.

In the interest of clarity in exposition, we have not optimized our results for constants and $\log n$ factors. To avoid technicalities arising with ceilings and floors, we have  assumed  that  $\log n$ and $\log \log n$ are integral.


\subsection{Survey of Related Works} 
The notion of overlay networks sprang out of a need for bringing flexibility in operating a distributed system despite an underlying  network  infrastructure that we cannot modify. Viewed from this broad perspective, overlay networks have been studied since the 80's, for example in~\cite{KLMR89} where the focus was on developing new network functionalities over a fixed interconnection network. Since the late 90's, there has been an explosion of P2P ideas and technologies mostly motivated by the need for fully decentralized P2P networks. Ideas like consistent hashing~\cite{Karger:1997:CHR:258533.258660} and web caching~\cite{Plaxton:1997:ANC:258492.258523,Korupolu:1999:PAH:314500.314880}, and P2P tools like Napster and Gnutella served as harbingers to the developments we have seen in the last two decades.   
These and other early developments led to significant research in  P2P systems motivated mainly by the need for distributed (and often completely decentralized) churn-resistant distributed hash table (DHT) platforms for storing, searching, and retrieving resources that are in essence {\tt <key, value>} pairs.    

To elaborate a bit, research in overlay networks and P2P has since then moved along two parallel (but not necessarily disjoint) strands of ideas: structured and unstructured P2P networks. 
In structured P2P networks, nodes (or rather their IDs)  and the keys share the same hash space. Nodes are hashed into a location in the network based on their ID and take up responsibility for resources that hash into that same space. This will allow a node seeking some resource identified by a key to hash the key and approach the node at that hashed location for the required resource. The big advantage with this approach is that  there is now a clean and efficient algorithm to admit new nodes, store resources, and discover them when needed. The down side to this approach is that the nodes will usually have to be arranged in a somewhat predictable and rigid manner. This often makes such networks vulnerable to heavy churn and security breaches. 

Structured P2P concept is perhaps best illustrated by Chord~\cite{SMKKB01} developed by Stoica et al. Nodes in Chord are arranged in the form of a ring in which the hash values of IDs are in sorted order. Resources are then stored in the node with the closest matching hash value. Pastry~\cite{RD01} and Tapestry~\cite{ZKJ01} are a couple of other similar P2P overlays. They both in fact employ prefix routing~\cite{Plaxton:1997:ANC:258492.258523} presented by Plaxton et al. CAN (Content Addressable Networks) introduced by Ratnasamy et al.~\cite{RFHKS01} around the same time arranges the nodes in a $d$-dimensional space. If the space is filled up appropriately, routing is robust (with several alternative paths) and efficient; resources are on average within a distance $O(dn^{1/d})$. However, CAN is sometimes prone to failures due to partitioning of nodes.
Viceroy~\cite{MNR02} described by Malkhi et al. is another  overlay network developed early in the history of P2P systems. They were early adopters of a topology very similar to the butterfly which we  use. 
Another topology that has gained quite a bit of attention~\cite{NW03,LKRG03,FG06,RSS11} is de Bruijn graphs~\cite{B46}, which provide very short average distances between nodes. See Loguinov et al.~\cite{LKRG03} for a detailed study of the benefits of de Bruijn graphs.

In the alternative idea of unstructured overlay networks, nodes connect with each other randomly with guarantees of good expansion. This means that we can employ flooding~\cite{APRU12} and random walks based techniques~\cite{GMS04,APR13,AMMPRU13,APRRU15,APR15} for searching and restructuring. Although these are significantly more inefficient compared to structured networks, these unstructured topologies are significantly more resilient to the effects of churn. 

Although churn is a very important factor in designing P2P overlays, it has received relatively low attention especially from theoretical researchers. As we mentioned earlier, Fiat and Saia~\cite{FS02} provided a P2P overlay design very similar to ours, but their analysis is limited to a one time failure (censorship in particular) rather than a more persistent notion (the way churn operates in reality). Their similarity to our work is two-fold in that they employ a notion of ``supernodes" which is similar to what we call committees, and furthermore their supernodes are also strung together in the form of a butterfly. 

Liben-Nowell et al.~\cite{LBK02} were some of the earliest to recognize the need for a formal investigation of such persistent churn, and, towards this goal, provided a characterization of the rate of churn in terms of what they call the half life time period, which is defined as the time for a network to either double or halve in size (i.e., number of nodes.). Awerbuch and Scheideler~\cite{AS04} proposed Hyperring, which is a distributed dynamic deterministic data structure capable of performing concurrent inserts, deletes and searches, all in $\plog(n)$ time.  Similar randomized distributed data structures like Skip Graphs~\cite{AS07} and Skipnet~\cite{HJSTW03} have also been presented in literature, but the rates at which they can handle churn is limited. Jacobs and Pandurangan~\cite{JP13} have presented a DHT capable of handling churn with insertions and deletions incurring at most $O(\log n)$ overhead, but their claims are limited to stochastic insertions and deletions. An interesting deterministic P2P overlay network was proposed by Kuhn et al.~\cite{KSW10}. They illustrate their approach via hypercube and pancake topologies. Each ``node" in these topologies is actually a collection of peers and in this sense akin to our idea of committees. Their overlay can withstand a churn of up to $O(\log n)$ nodes joining and leaving concurrently in every round. Moreover, their adversary is deterministic and therefore capable of observing the weakest spot in the network and targeting that spot for churn. 

Recently, there has been a flurry of works that investigate P2P systems that experience persistent heavy churn, which we define as a P2P system in which up to some $\Omega(n/\plog(n))$ nodes can join and leave the network in every round. In \cite{APRU12}, we showed that an unstructured overlay network -- modeled as a network in which expansion is maintained in every round -- is amenable to solving the agreement problem despite churn that is linear in the size of the network.  In every round, up to $\epsilon n$ nodes can join and leave the network for some small constant $\epsilon>0$. We also extended some of these ideas to Byzantine agreement~\cite{APR13} and  leader election problems~\cite{APR15,AKS15}. We also showed in \cite{AMMPRU13} that we can store and retrieve a data item despite near linear number of nodes (i.e., up to $O(n/\plog(n))$ nodes) churn in every round.  In~\cite{AMMPRU13}, we  employed the notion of committees, but  indexing the item requires the participation of $\Theta(\sqrt{n} ~\plog(n))$ nodes. Our current work also employs this notion of committees, but without the indexing overhead.
More recently, we showed in~\cite{APRRU15} that expansion (in an almost everywhere sense) can be guaranteed despite heavy churn. Most of these works employ an oblivious adversary, i.e., one that is aware of the protocols and can design churn in the worst possible manner, but it must do so without knowing the exact random bits employed by the protocols.

Recently and independently Drees et al.~\cite{DGS16} presented a P2P overlay network that can tolerate churn to the extent that all the nodes in the network can be replaced within $O(\log \log n)$ rounds. Moreover, this is achieved by an $\Omega(\log \log n)$-late omniscient adversary (formally defined in Section~\ref{sec:model}) that is oblivious only to the random bits employed in the most recent $\Theta(\log \log n)$ rounds. There are a few crucial differences between our models and our techniques are entirely different. In their model, the adversary must forewarn node insertions and departures at least $\Theta(\log \log n)$ rounds before they actually leave, whereas, in our case, the nodes can leave abruptly. Thus, in~\cite{DGS16}, the challenge is limited to quickly rebuilding the network every $\Theta(\log \log n)$ rounds using only the stable nodes that are guaranteed to remain in the network.  Moreover, they do not show how their structure can be constructed. We show how Spartan can be constructed  efficiently (in just $O(\log n)$ rounds) and general algorithms designed to operate on stable networks can be effectively simulated on Spartan.

\paragraph*{Organization of the paper}
We begin with a formal description of the model in Section~\ref{sec:model}.
We then present an overview of the Spartan overlay structure  in Section~\ref{sec:spartan}.  The overall design is described in Section~\ref{sec:design}. Then, in Section~\ref{sec:simulation}, we show how we can use Spartan for stable distributed computation, in particular, for implementing stable distributed hash tables.
Then, in Section~\ref{sec:warmup}, we present protocols to construct  Spartan.  The maintenance protocols -- including the crucial random walks based sampling procedure -- are described in Section~\ref{sec:maintain}.  Finally, we discuss   a variety of extensions in Section~\ref{sec:conc} illustrating Spartan's flexibility and  end with some concluding remarks.

\section{Network Model and Problem Statement} \label{sec:model}

 We consider a synchronous system in which time is measured in rounds.  An adversary presents a  dynamically changing sequence of sets of network nodes $\V = (V_1, V_2, \ldots)$ where $V_r \subset U$, for some universe of nodes $U$ and $r \ge 1$, denotes the set of nodes present in the network during round $r$.  For simplicity in exposition, we require the number of nodes  $|V_r|$ at any round to lie in $[n, fn]$ for some fixed $f\ge 1$; this limitation can be circumvented as discussed in Section~\ref{sec:conc}. Each node in $U$ is assumed to have a unique ID that can be represented in $O(\log n)$ bits.  For the first $B \in \Theta(\log n)$ rounds called the {\em bootstrap phase}, $\V$ is stable; i.e., for $1 \le r < B$, $V_r = V_{r+1}$. Subsequently, the network is said to be in its {\em maintenance phase} during which $\V$ can experience churn in the sense that  a large number of nodes (specified shortly) may join and leave dynamically at each time step. 

The {\em churn rate}  models the level of churn that the adversary can inflict on $\V$. It is specified by a pair $(\ch, \t)$ which implies that within {\em any} range of $\t$ consecutive rounds, at most $\ch$ nodes can be added and (a possibly different number of) at most $\ch$ nodes can be deleted.   Formally, $\forall r\ge 1$ and $1\le i\le \t$, the adversary must ensure that   $|V_{r+i} \setminus V_r| \le \ch$ and $|V_r \setminus V_{r+i}| \le \ch$.  The churn rate we consider is $(\epsilon n, \Theta(\log \log n))$ for a suitably small but fixed  $\epsilon>0$.

Any node $u$  can communicate with another  node $v$ and/or form an edge between them (with $v$'s concurrence) provided $u$ knows $v$'s ID; this is akin to forming TCP connections and establishing communication with a device whose IP address is known.  To facilitate edge formations, every new node $u$ that enters the network at a round $r$ is seeded with some limited knowledge of the network, typically just the ID of one other node $v$ present in rounds $r-1$ and $r$. The adversary must ensure that each $v$'s ID is given to only some $O(\log n)$ new nodes each round to avoid overloading $v$. Communication, edge formation, and limits on the amount of seed information will be formally discussed shortly.

 {\bf Overview of Problem Statement.} Our goal is to design a protocol that (whp) constructs and maintains an overlay network $\G = \{G_1, G_2, \ldots, G_r, \ldots \}$, where each $G_r=(V_r, E_r)$ denotes the overlay graph in round $r$. In particular, $E_r$ is the set of overlay edges created by the protocol. The protocol must
\begin{compactenum}
\item construct $\G$ during the bootstrap phase, and
\item maintain it in a well-defined robust manner despite churn for $\poly(n)$ rounds (whp).  
\end{compactenum}
Importantly, the overlay thus constructed and maintained must be sparse with degree bounded by $O(\log(n))$. We will prove several useful characteristics of the overlay $\G$ that we build. In particular, we show that protocols in the CONGEST model and node capacitated clique models designed for static networks  can be simulated in Spartan (see section~\ref{sec:simulation}).

{\bf Adversary's characteristics.} Our protocols operate against an almost adaptive adversary that subsumes the power of adversaries employed in both \cite{APRRU15} and \cite{DGS16}. We say that an adversary is {\em $t$-late omniscient} if, at each round $r$,  it is aware of all the algorithmic activities (including the random bits employed by the algorithm) from round 1 up to round $r-t$, but is oblivious to the execution and random bits employed by the algorithm from round $r-t+1$ onward. Such an adversary will, of course,  be aware of the overlay network $\G$ that the algorithm has constructed up till round $r-t$. Furthermore, we say that an adversary's {\em actions are instantaneous} if its decision to add or remove nodes take effect immediately without any forewarning. 
Our adversary's actions are instantaneous and it is assumed to be $\Theta(\log \log n)$-late omniscient. In contrast, \cite{APRRU15} employs the weaker oblivious adversary, but, like ours, its actions are instantaneous. The adversary in \cite{DGS16} is also $\Theta(\log \log n)$-late omniscient, but its actions are not instantaneous. Any changes to the network (both inserting and deleting nodes) must be announced to the protocol $\Theta(\log \log n)$-rounds before the change occurs.

{\bf Seed knowledge in new nodes.} If a new node is completely unaware of the rest of the network, it will be impossible for the node to integrate into the network. Therefore, we assume that new nodes are seeded with some very limited information. 

At the start of the {\em bootstrap phase}, we assume that each node in $V_1$ is seeded with $\Theta(\log n)$ IDs, each chosen uniformly at random (UAR) from IDs in $V_1$. This is a reasonable assumption. If Spartan is to be built from scratch, a centralized mechanism will be required -- and this assumption ensures that the role of the centralized mechanism can be limited to providing the random samples. On the other hand, we may wish to build Spartan from a pre-existing \Pe network. Random node IDs are typically easy to obtain in this scenario as well because most \Pe networks have good expansion and facilitate random walks that can mix fast -- a property that can be leveraged to obtain random sampling in a fully distributed manner. In fact, many of the existing works on \Pe networks including our own (see Section~\ref{sec:sampling-log}) provide efficient sampling procedures~\cite{GMS04,APR13,AMMPRU13,APRRU15}. 

Any node $u$ that enters the network at some round $r$ during the {\em maintenance phase} must be seeded with the ID of  at least one pre-existing node $v \in V_{r-1} \cap V_{r}$. Notice that this requires both $u$ and $v$ to be present during round $r$. Node $u$ can then contact $v$ and, in turn, $v$ can provide $u$ with some information that will help $u$ to become a well-connected node in the network.  However, if too many new nodes are seeded with the ID of the same node  $v$, then $v$ can be overwhelmed. So we limit the number of nodes seeded with the ID of any particular node to $O(1)$. 

 {\bf Overlay Edge Formation and Communication.} 
Each node $u$ is provisioned with $\Delta\in \Theta(\log(n))$ ports   through which overlay edges can be established and communication can take place. Suppose $u$ wants to establish an overlay edge with $v$. Then, $u$ must send an edge formation request  to $v$ and $v$ can either accept or ignore (thereby implicitly rejecting) the  request. Any overlay edge thus formed, in essence, connects a unique port in $u$ with a unique port in $v$. Consequently, such an overlay edge can be formed only if both $u$ and $v$ have ports to spare.    After the overlay edge $(u,v)$ is formed, either nodes can unilaterally drop the overlay edge at the end of any round, thereby freeing the associated ports in both nodes. If one of the endpoints $v$ of an overlay edge $(u,v)$ is churned out by the adversary, the edge is immediately dropped and $u$ is immediately aware of the port that has been freed up.

Nodes communicate and form overlay edges via message passing. A node $u$ can communicate with another node $v$  and/or form an overlay edge $(u,v)$ in $\G$ only if $u$ knows the ID of $v$; 
one can think of $v$'s IP address as playing the role of its ID. 
Node $u$, in addition to knowing the IDs of all its neighbors in $\G$, may also know IDs of nodes that it has received through messages. 
So $v$ is not required to be a neighbor in $\G$.  In order to avoid congestion, we limit the number of bits sent from $u$ to $v$ per round to be  at most $O(\plog(n))$ in size. Furthermore, no node can communicate with more than $O(\plog(n))$ other nodes per round. The protocol must ensure that no node sends more than $O(\plog(n))$ messages per round and no more than $O(\plog(n))$ other nodes send messages to any given node.\footnote{In fact, the only place where any node needs to send and receive $\omega(\log n)$ messages (but no more than $O(\plog(n))$ messages whp) is our random walks based sampling protocol (see Algorithm~\ref{alg:sampling}). This may however be unavoidable under $(\epsilon n, \Theta(\log \log n))$ churn. See the remark at the end of Section~\ref{sec:sampling-log} for more details.} 

The facility to communicate without forming edges may lead one to question why our model should even bother to form edges. We emphasize that there are several significant advantages to forming overlay edges. When $v$ is not an overlay  neighbor, $u$ may not know whether $v$ is still in the network. 
If $v$ is no longer in the network, $u$'s message to $v$ will simply be dropped. Moreover, $v$ will be guaranteed to receive messages sent by its overlay neighbors, but if $u$ isn't its overlay neighbor, then, $v$ may not receive the message sent by $u$; such situation may arise if the number of nodes sending messages to $v$ simultaneously exceeds the number of free ports in $v$. Finally -- and perhaps most importantly -- the very essence of building the overlay edges is to facilitate other protocols to run on the overlay network so constructed. Consequently, if we can prove that the overlay network $\G$ that we construct has certain properties, application designers can build their protocols upon such guarantees.

 {\bf Breakup of a round.} To avoid a fundamental impossibility (from Theorem 2 in~\cite{APRRU15}), each round permits a ``handshake" between two communicating nodes. 
So, if $u$ sends a message to $v$ in a round, then $v$ has the option of responding within the same round. 
Thus, in each round $r$, each node $u$ in sequence

\begin{compactitem}
\item[-] gets to know which ports are active (as  neighbors can drop edges or be churned out), 
\item[-]  performs local computation,
\item[-] sends up to $\plog(n)$ messages typically (but not limited to) requesting  for some information or requesting the formation of an edge,
\item[-] receives up to $\plog(n)$ messages sent by other nodes,
\item[-] performs local computation.
\item[-] sends up to $\plog(n)$ messages limited to responding to requests received, and finally
\item[-] receives up to $\plog(n)$ messages.
\end{compactitem}
Our protocol employs the second send/receive part of  rounds only when forming edges.  

%

\section{The Spartan Design and Applications}
\label{sec:spartan}
In this section, we first describe Spartan's design comprising $\Theta(n/\log n)$ committees (also called supernodes in literature) strung together in the form of a butterfly network. We then illustrate the power of this design to perform distributed computation on the network of committees. In particular, we show how algorithms in the CONGEST model of computation can be simulated in Spartan. We also describe how distributed hash tables can be implemented in Spartan.  In subsequent sections, we will  describe  how to build and maintain  Spartan in a concrete manner. 

\begin{figure}[htbp]
  \begin{center}
    \includegraphics[width=0.35\textwidth,clip=true,trim=150 50 250 45]{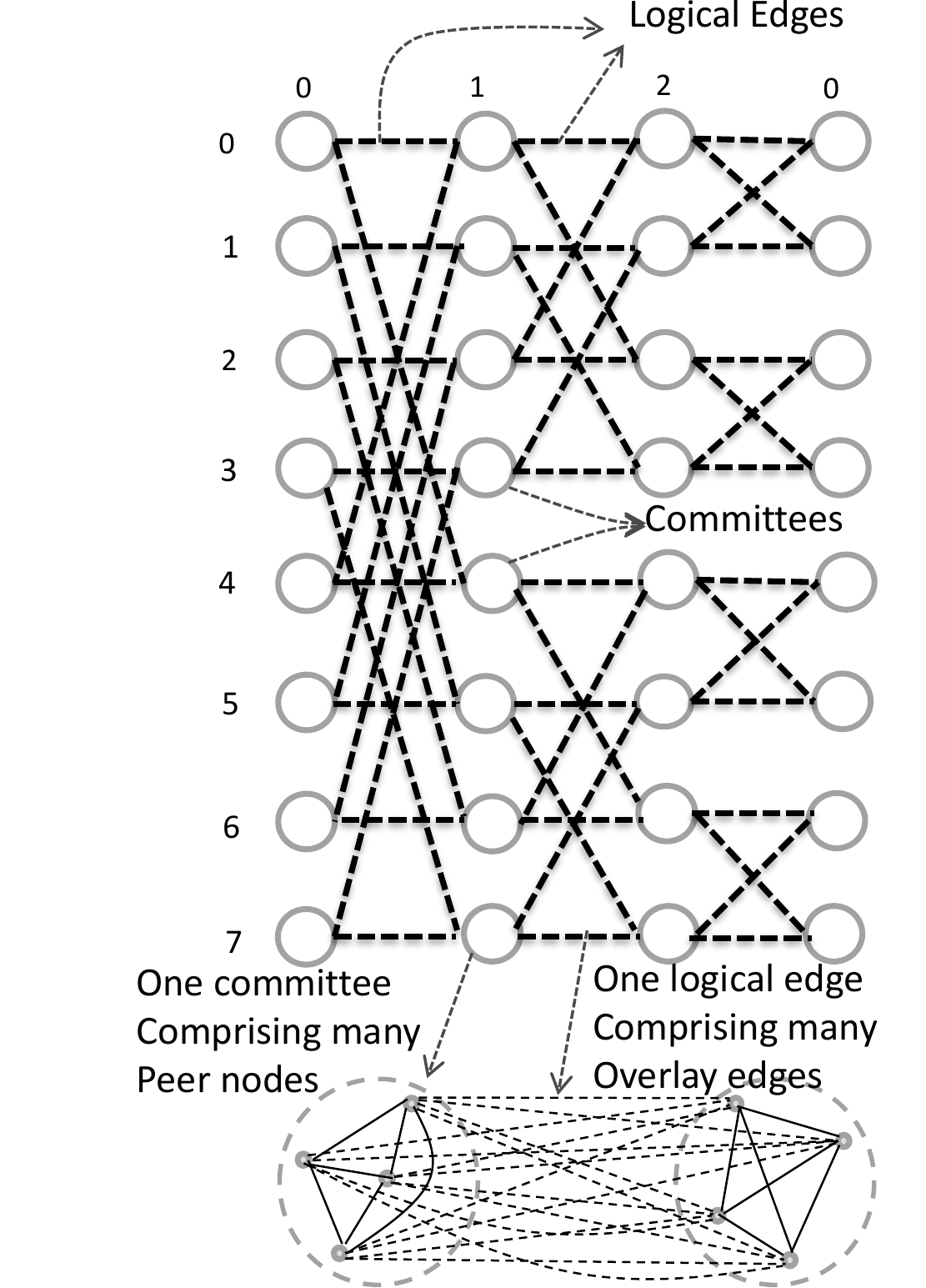}
  \end{center}
  \caption{A schematic depiction of a Spartan overlay built on a wrapped butterfly framework graph. Note that the first and last columns are the same set of committees since the butterfly is wrapped.}\label{fig:schematic}
\end{figure}

\subsection{Spartan Design} \label{sec:design}

The key building blocks of Spartan are committees. The Spartan architecture comprises a set $\C$ of  $\Theta(n/\log n)$ committees. Each committee consists of a dynamically changing set of member nodes chosen randomly with the guarantee that the committee has $\Theta(\log n)$ nodes as its members at any given time. The nodes within a committee are completely connected via overlay edges. Any pair of committees must be able to form {\em logical edges} between them. A logical edge between two committees is the complete bipartite set of edges between the nodes of the two committees. The set of committees $\C$ and the set of logical edges $\L$ together  form a wrapped butterfly framework graph $F=(\C, \L)$. The wrapped butterfly has the following structure. The committees in $\C$ (of cardinality $N  = k 2^k \in \Theta(n/\log n)$ for some $k$) are arranged in $2^k$ rows and $k$ columns. Thus, each committee $C \in \C$ can be addressed by a pair $(\r, \c)$, $0 \le \r < 2^k$ and $0 \le \c < k$,  denoting its row and column numbers, resp. Logical edges connect committees in a column $\c$ to committees in column $\c+1 \pmod k$. Two such  committees $(\r, \c)$ and $(\r', \c+1 \pmod k)$ are connected by a logical edge if and only if  either
\begin{compactenum}
\item both committees are in the same row, i.e., $\r = \r'$, or
\item $\r$ and $\r'$ differ at exactly the $(\c+1 \pmod k)$th bit in their binary representations.
\end{compactenum}

 Figure~\ref{fig:schematic} shows a schematic; for more details, please see Chapter 4 of~\cite{MU05}. The committees, logical edges, and $F$ must be built during the bootstrap phase and maintained (whp) for a sufficiently large $\poly(n)$ number of rounds. Since $F$ is a constant degree graph, the overall degree of each node (in terms of the number of incident overlay edges) is at most $O(\log n)$.
We require the Spartan protocols to create and maintain $\C$, $\L$, and therefore $F$.

The $\Theta(\log n)$ size of committees has two main advantages that we exploit. On the one hand,  the committees are large enough to ensure that an adversary that does not know the current members in the committee will (whp) be unable to disrupt it beyond a manageable level.  This advantage can however be lost quickly if we leave a committee unattended for $\Omega(\log \log n)$ rounds under a churn rate of $(\epsilon n,\log \log n)$ as all nodes can be replaced within $O(\log \log n)$ rounds.  Thus, nodes in the committees are dynamic. Each node moves to a new committee every $\Theta(\log \log n)$ rounds on expectation. In particular, unless churned out, a node stays in a committee for a number of rounds drawn from the geometric distribution with parameter $\Theta(1/\log \log n)$.

 On the other hand, the size of each committee is small enough that  the {\em set of  committee members at round $r$}  of a committee $C \in \C$ denoted $\ccg_r(C)$  can be encoded within $O(\log^2 n)$ bits. (We ignore the subscript when the round in question is clear from context.)

%
%

We say that a {\em committee $C$ has been  robust until round $r$} if (i) it was successfully created during the bootstrap phase, (ii) at every round $r'\in \{B, B+1, B+2, \ldots, r\}$, the number of members was at least $\Omega(\log n)$ nodes, and (iii) in every pair of consecutive rounds $r'-1$ and $r$, there is at least one common node in $C$ in both those rounds. Likewise, we say that a {\em logical edge between two committees $C_1$ and $C_2$ has been   robust until round $r$} if (i) both $C_1$ and $C_2$ have been robust until round $r$ and (ii) the logical edge was successfully created during the bootstrap phase and, at every round $r'\in \{B+1, B+2, \ldots, r\}$, there is an edge between every $u \in \ccg_{r'}(C_1)$ and $v \in \ccg_{r'}(C_2)$.  We say that a {\em Spartan implementation has been robust until round $r$} if all the $N = O(n/\log n)$ committees and the $O(N)$ logical edges have been robust until round $r$. The forthcoming sections are devoted to showing how we can construct and maintain Spartan for an arbitrarily large polynomial in $n$ rounds. In the rest of the section, we will briefly discuss how Spartan can be useful for distributed computation and distributed hash tables.

\subsection{Distributed Computation and Distributed Hash Tables in Spartan}\label{sec:simulation} While several works have addressed computation in highly dynamic networks, we take a significant step forward in designing a mechanism whereby standard algorithms  in established static distributed computing models  can be executed off-the-shelf. Additionally, distributed hash tables with $O(\log n)$ time storage and lookup times can be implemented despite heavy churn.

Given a Spartan network that is robust for a sufficiently long period of time, we can execute any CONGEST algorithm $\A$ that is designed to run on a butterfly network with $N = |\C| = O(n/\log n)$ nodes provided that the state of the nodes executing $\A$ remains small. Each committee $C$ takes up the role of executing $\A$ as if it was a node in a butterfly of $N$ nodes. However, the constitution of each committee can change with time, so care should be taken to ensure that any new node entering the committee should be updated with the current state of the node that is being simulated by the committee. This requires a minor restriction that the state of the nodes executing $\A$ must be bounded by $O(\plog(n))$ bits. (Of course, one can trade off this restriction with the churn rate.  CONGEST algorithms with larger states can be executed as long as the churn rate is slow enough to ensure that states can be copied into new nodes.) In addition, as per our model, committees can also directly communicate with each other even if they are not neighbors in the butterfly, provided that the nodes in each committee know the IDs of nodes in the other committee. This means that any algorithm designed for the node capacitated clique model~\cite{AGG+19} can also be executed.

Importantly, Spartan can be used to implement distributed hash tables. The committees serve as addressable locations within the butterfly network and any data item with a well-defined key value can be stored in the location given by the hashed value of its key. We give a brief overview here and refer to~\cite{JP13} for more details. Whenever a data item in the form of a $\langle \text{key, value}\rangle$ pair is to be stored in the network, the data item is routed to the location given by the hashed value of the key. Similarly, when the item  corresponding to a particular key is required, the request is routed to the location given by the hashed value of the key and retrieved from the committee in that location. Care should be taken to ensure that the data item is maintained properly in the committee. As in the case of simulating CONGEST algorithms, the data items stored in each committee should be copied to new nodes that enter the committee.

\section{Constructing Spartan During the Bootstrap Phase} \label{sec:warmup}
We will now present the steps to construct Spartan during the bootstrap phase.  Algorithm~\ref{alg:warmup-overview} provides the high level steps, each of which requires at most $O(\log n)$ rounds. Moreover, all these steps respect the congestion requirement in that no node will either send or receive more than $O(\log n)$ messages at any round; we will explicitly discuss this whenever it isn't obvious.  Each step is then explained and analyzed in detail subsequently.
\begin{algorithm}
\caption{Overview of approach to build a relaxed Spartan during the bootstrap phase.}\label{alg:warmup-overview}
\begin{algorithmic}[1]
\small
\STATE A leader node $\ell$ is elected taking advantage of a fraction of the $\Theta(\log n)$ seed random IDs that each node in $V_1$ possesses.  \label{lno:leader}
\STATE Construct an $O(\log n)$ height binary tree with vertex set $V_1$ and $\ell$ as root. (Recall $V_1=V_2=\cdots = V_B$.) \label{lno:binary} 
\STATE Compute the inorder traversal number for each node in $O(\log n)$ rounds (taking advantage of the height being $O(\log n)$) and form a cycle using the first $N = k 2^k \in O(n/\log n)$ nodes, each of which will eventually become committee leader. \label{lno:cycle}
\STATE This cycle is transformed in $O(\log n)$ rounds into a wrapped butterfly network with $k$ columns and $2^k$ rows.  \label{lno:butterfly}
\STATE The nodes that are not committee leaders will then randomly join one of the $N$ committees such that each committee has $\Theta(\log n)$ nodes. \label{lno:join}
\STATE The nodes within each committee form complete pairwise overlay edges.  \label{lno:formccg}
\STATE The committee leaders exchange $\ccg$s with their neighbors in the butterfly network using which the complete bipartite overlay edges can be formed between nodes in neighboring committees.  \label{lno:formedges}

\normalsize
\end{algorithmic}
\end{algorithm}

\noindent {\bf Line number  \ref{lno:leader} of Algorithm \ref{alg:warmup-overview}.}  We claim that a leader $\ell$ can be elected by the following elementary algorithm: each node generates a random number with a sufficiently large number of $\Theta(\log n)$ bits  and creates a message with the random number and its ID. It then choses $\Theta(\log n)$ nodes uniformly at random from the samples it has been seeded with (without replacement) and floods the message to those nodes. In every round thereafter, it continues to flood the message with the largest value seen so far. After a period of some $\Theta(\log n)$ rounds, we claim that (whp) only one message, that belonging to the node $\ell$ which generated the largest random number, will survive and $\ell$ would be elected leader.
\begin{lemma}
	The leader election algorithm employed in line number \ref{lno:leader} of Algorithm \ref{alg:warmup-overview} can, whp, correctly elect a unique leader $\ell$ in $O(\log n)$ rounds. Furthermore, each node consumes at most $O(\log n)$ of its seeded random ID samples.
\end{lemma}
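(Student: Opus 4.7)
The plan splits naturally into three parts: uniqueness of the maximum random value, rapid propagation of that value to every node, and a simple seed-ID accounting.

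For uniqueness, I would have each node draw a random value of $c_1 \log n$ bits for a sufficiently large constant $c_1$ (say $c_1 \ge 3$). A union bound over the $\binom{n}{2}$ pairs gives collision probability $O(n^{2-c_1}) = O(n^{-1})$, so whp exactly one node $\ell$ holds the strict maximum. Once every node receives $\ell$'s value, each extracts the accompanying ID and declares $\ell$ the leader.

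For propagation, I would view the algorithm as gossip on a random out-regular digraph $H$: each node fixes a set of $c_2 \log n$ flooding targets (a uniform subset of its seeded IDs, which, since the seeds themselves are UAR, is essentially a uniform random subset of $V_1$ of that size, independent across nodes) and in every round pushes its running max along these same edges. Let $S_t$ be the set of nodes holding $\ell$'s value at the end of round $t$, with $S_0 = \{\ell\}$. Using the principle of deferred decisions, I would expose the out-neighbor sets only when a node first enters $S_t$, so that conditional on the history the newly exposed neighbors are uniform and independent. In the early phase, while $|S_t| \le n / (c_2 \log n)$, for any $v \notin S_t$ the inclusion probability is $1 - (1 - c_2 \log n / n)^{|S_t|} = \Theta(|S_t| \log n / n)$, and a Chernoff bound yields $|S_{t+1}| \ge \Omega(\log n) \cdot |S_t|$ whp. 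Hence $|S_t|$ surpasses $n/2$ within $O(\log n / \log \log n)$ rounds. Once $|S_t| \ge n/2$, the probability that any specific remaining $v$ fails to receive the max in the next round is at most $(1 - c_2 \log n / n)^{n/2} \le n^{-c_2/2}$, so for $c_2$ sufficiently large a union bound gives coverage of everyone in one additional round whp. The total is $O(\log n)$ rounds.

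For seed-ID consumption, each node picks its $c_2 \log n$ flooding targets once and reuses exactly this set in every subsequent round, so only $O(\log n)$ seeded IDs are consumed per node, as required.

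The main obstacle is the whp growth argument in the early phase. The events $\{v \in S_{t+1}\}$ for distinct $v$ are dependent when one fixes the random digraph $H$, so a direct Chernoff application fails. The deferred-decisions coupling described above resolves this by exposing fresh independent neighbor sets per round; alternatively, one can invoke a negative-association or bounded-differences martingale argument on the out-neighbor choices. A secondary subtlety is the transition zone near $|S_t| \approx n / \log n$, where the linearization $1 - (1-p)^k \approx kp$ no longer applies tightly, and one must either round down the growth factor or handle this narrow interval separately before entering the one-round mop-up phase.
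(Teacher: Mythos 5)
Your overall plan matches the paper's: $\ell$'s maximum value is spread by flooding along a fixed random set of $\Theta(\log n)$ targets per node, and the crux is that this random communication graph propagates a message to everyone within $O(\log n)$ rounds whp. The paper dispatches the propagation step in a single sentence by identifying the flooding graph with $G(n,p)$ at $p = \Theta(\log n / n)$ and invoking its $O(\log n)$ diameter, whereas you give a from-scratch informed-set growth argument. Your version is more self-contained, explicitly handles uniqueness of the maximum (which the paper leaves implicit), and your instinct to worry about the dependencies among the events $\{v \in S_{t+1}\}$ and resolve them by deferred decisions is exactly right.

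There is, however, a bookkeeping slip in the growth analysis. Under your deferred-decisions coupling you expose a node's out-neighbor set exactly once, when that node first enters $S_t$; since each node reuses that same target set every round, it sends nothing new after its first send. Thus by round $t$ the out-neighbors of every node in $S_{t-1}$ have already been exposed and already lie in $S_t$, and the only randomness still available to pull a node $v \notin S_t$ into $S_{t+1}$ comes from the frontier $\Delta_t := S_t \setminus S_{t-1}$. The inclusion probability should therefore be $1 - (1 - c_2\log n/n)^{|\Delta_t|}$, not $1 - (1 - c_2\log n/n)^{|S_t|}$, and the same substitution is needed in your one-round mop-up step (``once $|S_t| \ge n/2$'' must become ``once $|\Delta_t|$ is large''). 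The argument is salvageable: during the fast-growth phase $|\Delta_t|$ is a constant fraction of $|S_t|$ because the per-round growth factor is $\Omega(\log n)$, and once $|\Delta_t| = \Omega(n/\log n)$ the next round's frontier is already a constant fraction of $n$ whp, after which one more round finishes --- but the recursion has to be tracked on $\Delta_t$ rather than on $S_t$ to be sound.
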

\begin{proof}

	From the description of the protocol, it is clear that each node consumes at most $O(\log n)$ seeds from its samples. To see how the algorithm will take $O(\log n)$ rounds, consider this. Each node chooses $\Theta(\log n)$ nodes at random at the start of the protocol (and thereafter uses the same nodes) to flood its message. This means that we are looking at a random graph $G(n,p)$ with $p$ in $\frac{d\log n}{n}$, with $d>1$ being a sufficiently large constant. We know that for such a graph, the diameter is in $O(\log n)$, thus with high probability, $\ell$'s message will be flooded through the network in $(O(\log n))$ rounds.

\end{proof}

\noindent {\bf Line number  \ref{lno:binary} of Algorithm \ref{alg:warmup-overview}.}
The tree is created in two phases, both of which run for $O(\log n)$ rounds and each node consumes at most $O(\log n)$ of the seed tokens. The first phase begins with the tree being just the root node $\ell$. Each node in the tree begins with two vacant spots for its children. At each time step, each node in the tree that has at least $0<q \le 2$ vacancies queries $q$ nodes at random and requests them to become its children. Nodes not in the tree (called non-tree nodes), if requested, accept at most one such request and fill the vacancy and become nodes in the tree. The first phase lasts for a suitably large $\Theta(\log n)$ rounds after which (as shown in Lemma~\ref{lem:phase1}) the tree consists of at least $n/2$ nodes (whp).  Note that such a tree will have at least $n/2$ vacancies as well.

During the second phase, in each round, each non-tree node $u$ randomly probes a node $v$ (again by consuming one of its random samples) to see if it is a tree node with a vacancy.  If $v$ is a tree node and has a vacancy, then, it will respond positively to exactly one such request among the several it might have received. If $v$ responded positively to $u$, then $u$ becomes its child. Again, we will show in Lemma~\ref{lem:phase2} that within $O(\log n)$ rounds, (whp) the number of non-tree nodes dwindles to zero. 

We emphasize that in both phases, the sampling is from the global set of nodes. In the first phase (resp., second phase), samples that fall on the nodes already in the tree (resp., node not yet in the tree and tree nodes that already have two children) are ignored. Crucially, by a simple balls-into-bins argument, this ensures whp that no node receives $\omega(\log n)$  requests per round. Moreover, just the $\Theta(\log n)$ seed samples per node will be sufficient to execute line number  \ref{lno:binary} of Algorithm \ref{alg:warmup-overview}.

\begin{lemma} \label{lem:binary}
	For the procedure described in Line number \ref{lno:binary} in Algorithm \ref{alg:warmup-overview}, the following two statements hold:\\
	i) After a sufficiently large $\Theta(\log n)$ rounds of the first phase, (whp) the tree has grown to have at least $\lfloor n/2 \rfloor$ nodes in it.  \label{lem:phase1}
	\\
	ii) After a sufficiently large $\Theta(\log n)$ rounds of the second phase, (whp) the number of nodes not in the tree has dwindled down to zero.  \label{lem:phase2}
	\\
Because of i) and ii), after a a sufficiently large $\Theta(\log n)$ rounds, (whp) the tree has grown to have $n$ nodes in it. Furthermore, no node sends more than $O(1)$ messages and (whp) every node receives at most $O(\log n)$ messages.
\end{lemma}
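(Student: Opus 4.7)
The plan is to analyze phases~1 and~2 as a controlled growth process and a controlled decay process respectively, with a separate balls-into-bins argument for the congestion bound. Let $t_r$ (resp.\ $s_r$) denote the number of tree nodes (resp.\ non-tree nodes) at the start of round $r$ of phase~1 (resp.\ phase~2). Throughout, the random samples consumed in any given round are treated as independent uniform draws from $V_1$, which is the key source of randomness on which the concentration arguments rely.

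For part~(i), note that a binary tree with $t_r$ nodes has exactly $t_r+1$ empty child slots, so in round $r$ the tree issues at least $t_r$ independent uniform queries. A standard balls-into-bins calculation gives that the expected number of distinct non-tree nodes hit (which equals the number of new tree nodes added) is at least
\begin{equation*}
(n-t_r)\bigl(1-(1-1/n)^{t_r}\bigr) \;\ge\; c_1\, t_r\,(1-t_r/n)
\end{equation*}
for some constant $c_1>0$. Whenever $t_r\le n/2$ this is $\Omega(t_r)$, and once $t_r \ge C\log n$ a Chernoff bound (exploiting the negative correlation of the hit indicators) makes the actual growth concentrated whp, so $t_r$ at least doubles every $O(1)$ rounds and thus reaches $\lfloor n/2\rfloor$ in $O(\log n)$ rounds whp. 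The small-count regime $1 \le t_r < C\log n$ must be handled separately: since the growth process stochastically dominates a supercritical branching process with offspring mean bounded below by a constant greater than~$1$, standard branching-process estimates show that $t_r$ escapes the small-count window within $O(\log\log n)$ rounds with probability $1-n^{-\Omega(1)}$.

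For part~(ii), phase~2 begins with $s_0 \le n/2$ non-tree nodes and at least $n-s_0+1$ tree vacancies. A pigeonhole argument gives that the number $V_r$ of tree nodes with at least one vacancy satisfies $V_r \ge (n-s_r)/2$. Each non-tree node issues a single uniform query per round, and each tree node with a vacancy accepts at most one query per round. The expected number of non-tree nodes absorbed in round $r$ is then at least
\begin{equation*}
V_r\bigl(1-(1-1/n)^{s_r}\bigr) \;\ge\; c_2\, s_r\,(n-s_r)/n \;\ge\; c_2\, s_r/2,
\end{equation*}
using $s_r\le n/2$. Hence $s_{r+1} \le (1-c_2/2)\,s_r$ whp by Chernoff as long as $s_r=\Omega(\log n)$, driving $s_r$ down to $O(\log n)$ within $O(\log n)$ rounds. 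For the tail regime $s_r = O(\log n)$, each surviving non-tree node is absorbed in any given round with constant probability, so $O(\log n)$ further rounds followed by a union bound over the surviving nodes drives $s_r$ to $0$ whp.

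The congestion bounds are nearly immediate: each node sends at most two messages per round in phase~1 and at most one per round in phase~2. For incoming messages, at most $2n$ queries are sent per round, each to an independent uniform target in $V_1$, so a standard balls-into-bins tail bound shows that whp no node receives more than $O(\log n)$ queries in any single round, and a union bound over the $O(\log n)$ rounds preserves the guarantee. The main technical obstacle is handling the concentration transition at the ends of each phase, where $t_r$ or $s_r$ is $O(\log n)$ and Chernoff is too weak; this is resolved by the branching-process estimate for the small-$t_r$ regime of phase~1 and by the per-survivor union bound for the small-$s_r$ regime of phase~2.
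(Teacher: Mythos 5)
Your overall strategy is essentially the one the paper uses: analyze each phase as a per-round multiplicative growth (resp.\ decay) process, obtain a constant-factor change in expectation, apply a Chernoff-type concentration bound (justified by negative association of the hit indicators), iterate for $\Theta(\log n)$ rounds, and handle the congestion bound by a separate balls-into-bins tail estimate. Your phase-2 argument via the pigeonhole bound $V_r \ge (n-s_r)/2$ is in fact written out more explicitly than the paper's, which simply appeals to symmetry with phase 1; that extra care is a plus.

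The one step that does not hold up as written is the small-count regime of phase 1. You assert that because the growth process stochastically dominates a supercritical branching process with offspring mean $>1$, it escapes the window $t_r < C\log n$ within $O(\log\log n)$ rounds with probability $1-n^{-\Omega(1)}$. This inference is not valid: a supercritical Galton--Watson process started from a single individual goes extinct, or fails to grow, with some constant probability $q \in (0,1)$, so stochastic domination of such a process yields escape with probability at most $1-q$, a constant strictly below $1$, not $1-n^{-\Omega(1)}$. The conclusion you want is nevertheless true, but for a more elementary reason specific to this process: while $t_r = O(\log n)$, the $t_r+1$ uniform queries collide with one another or land on a tree node with probability only $O(\log^2 n / n)$, so with probability $1 - O(\log^2 n / n)$ one has $t_{r+1} = 2t_r + 1$ exactly; a union bound over the $O(\log\log n)$ rounds of this regime then shows the window is escaped deterministically on a $1-o(n^{-1/2})$ event. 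You should replace the branching-process appeal with this birthday-style collision argument. For comparison, the paper sidesteps the small-count issue differently: it only needs each round to be ``good'' (i.e., a constant fraction of vacancies filled) with at least constant probability, and then argues that over $\Theta(\log n)$ rounds there are $\Theta(\log n)$ good rounds whp, which suffices because $\tau$ is monotone non-decreasing. Either route works; the point is that a per-round whp claim is not available when $\tau$ is small, and your proposal needs the justification for that regime to be repaired.
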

\begin{proof}

We will first proceed to prove statement (i). Statement (ii) holds by symmetry. Let's fix a round $r$, let the number of nodes currently in the tree be $\tau$, this means that there are exactly $\tau+1$ vacancies and $n-\tau$ non tree nodes. We want to calculate exactly how many vacancies are being filled in a round. Let $X$ be the random variable that counts the number of vacancies that has been filled in a round $r$. Note that the event of an vacancy being filled by a node $u$ is precisely the same as node $u$ becoming a non-tree node. Which is also equivalent to node $u$ getting at least one invite. Thus, at any round $r$, counting the number of nodes that received an invite, gives us a measure on the number of nodes that have become a part of the tree in $r$ (i.e., $X$). Let $X_i$ be a random variable such that
 \begin{center}
 	\[ X_i = \begin{cases} 1 & \text{if non-tree node $i$ received at least 1 invite} \\ 0 & \text{otherwise.} \end{cases} \]
 \end{center}
Clearly, $X=\sum_{i=1}^{n-\tau} X_i$. At this juncture, it is important to note that the random variables $X_i$ are not mutually independent. However, as they are negatively associated (chapter 4 of Dubashi and Paconesi's book \cite{DP09}) (as we will briefly explain in the following paragraph) we are free to use various probabilistic bounds. 
To see how the random variables are negatively associated,  we may visualize the act of vacancies inviting nodes as a balls and bins model.  In the act of a vacancy inviting a node to be its child, the node with the vacancy ($u$) uses one of its random samples to invite another node (that's chosen UAR from the network) to fill its vacancy from among the $n$ nodes in the network. This is comparable to the act of throwing a single ball into $n$ bins. Clearly in this setting, a non tree node getting at least one invite is equivalent to a bin being non empty (i.e., getting at least one ball). We now introduce the following two useful lemmas:
\begin{lemma}
Consider the balls and bins model where $m$ balls have been thrown at $n$ bins ($m$ not necessarily equal to $n$). Let $B_i$ count the number of balls in bin $i$. The random variables $B_1,..., B_n$ are negatively associated. [$from$ Eg. 3.1 \cite{DP09}] \label{lem:bbm}
\end{lemma}
\begin{lemma} 
In the same balls and bins model introduced in Lemma~\ref{lem: tree}, under the Discrete Monotone Aggregation property, random variables that are non-decreasing or non-increasing functions of the negatively associated variables $B_1,...B_n$ are also negatively associated.  [$from$ Eg. 3.2 \cite{DP09}] \label{lem:dma}
\end{lemma}

From both lemma \ref{lem:bbm} and lemma \ref{lem:dma}, it's clear that the random variables $X_i$'s are negatively associated. Thus, for any value of $k$, when $2k$ balls are thrown at $n$ bins, 
\begin{equation}
    \mathbf{Pr}(\text{less than $k$ bins are filled}) \leq {n \choose k} \frac{k^{2k}}{n^{2k}}.
\end{equation} 
When there are $\tau$ nodes in the tree, and therefore $\tau+1$ vacancies:

\begin{center}
$\textbf{Pr}( X \leq (\tau+1)/2  )\leq {n\choose {(\tau+1)/2}}$ $({\frac{(\tau+1)/2}{n}})^{(\tau+1)}$ \label{eqn:one}

\end{center}
Now we may the use the following standard inequality that states:
\begin{center}
	\begin{align*}
	\forall k,   1\leq k \leq n, \frac{n^k}{k^k} \leq {n \choose k} \leq  \frac{e^kn^k}{k^k}
	\end{align*}

\end{center}
To get: 
\begin{center}
	\begin{align*}
	\mathbf{Pr}(X \leq \frac{\tau+1}{2}) & \leq \left(\frac{e^{\frac{\tau+1}{2}}n^{\frac{\tau+1}{2}}}{\frac{\tau+1}{2}^{\frac{\tau+1}{2}}}\right) \left(\frac{\frac{\tau+1}{2}}{n}\right)^{\tau+1}
	\\
	\\ 
	& \leq \frac{e^{(\tau+1)/2}}{4^{(\tau+1)/2}}{\left(\frac{2(\tau+1)}{n}\right)}^{(\tau+1)/2}
		\end{align*}
\end{center}


 

For any $d>0$ and $\tau\leq n^{1/d}$, since the fraction $\frac{2(\tau+1)}{n}$ approaches $\frac{1}{n^{1-1/d}}$, this means that more than half the vacancies are filled with high probability. But, as $\tau$ approaches a fraction of $n$ (say $n/\lambda$ for any $\lambda$, $0<\lambda\leq 2$),  the above probability approaches $1-(1/\lambda)^{\tau}$ which does not guarantee high probability. However,  even as $\tau$ approaches a fraction of $n$,  at least one fourth  of the vacancies are filled in every round with high probability.

  Recall that $X= \sum_{i}^{n-\tau} X_i$. We have $\mathbf{Pr}(X_i=1)$ is $ 1-(1-1/n)^{\tau+1}$. The expected number of vacancies filled in a round is then $\mathbf{E}[X]= (n-\tau) \left( 1-(1-1/n)^{\tau+1}\right) $. We first use the binomial expansion to obtain the following upper bound:
  \begin{center}
  	\begin{align*}
  (n-\tau)(1-1/n)^{\tau+1} \leq (n-\tau) \left[1-\frac{\tau+1}{n}+\frac{(\tau+1)(\tau)}{2!n^2}\right]
  	\end{align*}
  \end{center}
    Then we use the above bound to obtain a lower bound on the expectation as follows:
%
\begin{center}
	\begin{align*}
	\mathbf{E}[X] \geq \frac{3(\tau+1)}{8}
	\end{align*}

\end{center}
We will now use the following  useful variation of the chernoff bound from Mitzenmacher and Upfal~\cite{MU05} (where $\mu$ denotes the expectation of random variable $X$)
\begin{center}
	\begin{align*}
	\mathbf{Pr}(X\leq(1-\delta)\mu) < e^{-\frac{\mu\delta^2}{2}}, \text{for any }  0<\delta \leq 1  \\
	\end{align*}
	
\end{center}

 Since $\delta\mu \geq \frac{\tau+1}{8}$ and thus $\mu^2\delta^2\geq\frac{{(\tau+1)}^2}{64}$, the probability that at most one fourth of the vacancies are filled in a round $r$ is given as:
%
%
%
\begin{center}
	\begin{align*}
		\mathbf{Pr}(X \leq \frac{(\tau+1)}{4}) \leq \mathbf{exp}{\left\lbrace- \frac{(\tau+1)}{128}\right\rbrace}
	\end{align*}
\end{center}
Thus, for each round $r$, where $1<\tau+1 < n/2$, whp, at least a fourth of the vacancies are filled. Call such a round in which one fourth of the vacancies are filled a good round. Since for each round $r$, the probability of it being a good round is high, for given values of $n$, we will be able to calculate the value $R$ such that when we run the algorithm for $R\log n$ rounds, there are at least $10 \log n$ good rounds with high probability. Thus in $R\log n$ rounds, the number of nodes in the tree has grown to $n/2$. Statement ii) holds by symmetry, as we are looking at the same balls and bins scenario as above but with the tree nodes and non-tree nodes switched (as in the second phase, the non-tree nodes are looking for vacancies). Therefore, the same argument holds (only in reverse as we start with at most $n/2$ non-tree nodes that slowly dwindle down to zero). Thus in $\Theta(\log n)$ rounds, we have a tree on $n$ nodes with high probability.

Finally, it is clear from the protocol that each node only sends $O(1)$ messages, but it is not immediately clear whether they only receive $O(\log n)$ messages (whp). To show this upper limit on the number of messages received, we note that -- in each round of both phases -- nodes send messages to randomly chosen other nodes. So this process can again be viewed as a balls-into-bins process with at most $n/2$ balls being thrown randomly into $n$ bins. So no bin will receive any more than $O(\log n)$ balls (whp). This balls-into-bins limit implies that no node will receive more than $O(\log n)$ messages at any round. 
\end{proof}

\begin{lemma} \label{lem: tree}
There exists an $O(\log n)$ round implementation of Line number \ref{lno:binary} of Algorithm \ref{alg:warmup-overview} that (whp) results in a binary tree whose height is $O(\log n)$. 
\end{lemma}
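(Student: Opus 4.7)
The plan is to observe that the height bound follows essentially as a corollary of Lemma~\ref{lem:binary} via a simple deterministic invariant about how nodes enter the tree. Lemma~\ref{lem:binary} already guarantees that, whp, the two phases together finish in $R = O(\log n)$ rounds with every node included. My only remaining task is to show that whenever this happens, the resulting tree has height $O(\log n)$.

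The key structural invariant I intend to establish is: \emph{a node $u$ that joins the tree in round $r$ has depth at most $r-1$}, where the depth of the root $\ell$ is $0$. I will prove this by induction on $r$. The base case $r=1$ is immediate since only $\ell$ is in the tree at round $1$. For the inductive step, any node $u$ joining in round $r$ attaches as the child of some node $v$ that was already in the tree at the end of round $r-1$; hence $v$ itself joined in some round $r' \le r-1$, and by the inductive hypothesis $\mathrm{depth}(v) \le r' - 1 \le r - 2$. Therefore $\mathrm{depth}(u) = \mathrm{depth}(v) + 1 \le r - 1$. Crucially, this argument is oblivious to which of the two phases we are in: whether the edge is formed by a tree node inviting a non-tree node (phase 1) or by a non-tree node probing a tree node with a vacancy (phase 2), the parent is always an older tree node, so the invariant propagates identically in both cases.

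Combining the invariant with the $O(\log n)$-round bound from Lemma~\ref{lem:binary}, every node in the final tree has depth at most $R - 1 = O(\log n)$, so the height is $O(\log n)$ whp. The congestion, correctness, and round-complexity guarantees all carry over directly from Lemma~\ref{lem:binary}, so no new probabilistic analysis is required.

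I do not anticipate any genuine obstacle in this proof; the lemma is best viewed as a structural corollary of the running-time bound already in hand. The only subtlety worth flagging is to verify that the construction never produces a parent-child edge that ``skips time,'' i.e., that the parent is always strictly older than the child---which is immediate from the description of the protocol (a vacancy is filled only by a node that is currently a non-tree node, and the fill takes effect in the subsequent round).
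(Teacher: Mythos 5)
Your proposal is correct and takes essentially the same approach as the paper: both combine the $O(\log n)$-round bound from Lemma~\ref{lem:binary} with the observation that the tree's height can grow by at most one per round, which you simply recast more formally as the inductive invariant that a node joining in round $r$ has depth at most $r-1$.
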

\begin{proof}
From lemma \ref{lem:binary} it follows that there exists a procedure that can in $O(\log n)$ (whp) build a binary tree that includes all the nodes in the network. The height constraint follows from the fact that in any round the height of the tree being built can be increased by at most 1. Since the the procedure is terminated after $O(\log n)$ rounds, it follows that the height of the tree can be at most $O(\log n)$.
\end{proof}

{\bf Line number  \ref{lno:cycle} of Algorithm \ref{alg:warmup-overview}.}  The inorder traversal number is quite easy to compute. First we use a bottom-up convergecast, in which each node (starting with the leaf nodes)  sends up the number of children rooted at their sub-trees.  This ensures that each node in the tree knows exactly how many descendants are in each of their left and right subtrees after a period of $\Theta(\log n)$ rounds. 

When the root gets this information, it can compute its inorder traversal number as $1+L$, where $L$ is the number of nodes in it's left subtree. Notice that all the nodes in the left subtree will have inorder traversal numbers in the range $[1, L]$ while the nodes in the right subtree will have inorder traversal numbers in the range $[L+2, n]$. The root then passes these ranges to its appropriate children. We can continue this process in a top-down recursion such that each node gets to know its inorder traversal number and passes on the appropriate range of inorder traversal numbers to its two children.
 Since the number of committees $N \in \Theta(n/\log n)$ can be precomputed, the first $N$ nodes (in the inorder traversal ordering) can identify themselves as committee leaders. 

Now, we again perform a bottom up process in which the $N$ committee leaders organize themselves as a ring. Observe that a ring (along the lines of a circular linked list) can be identified by the address of one arbitrary node in the ring called the head; each node within the ring has an overlay edge pointing to its clockwise neighbor and another to its counter-clockwise neighbor. First, the leaves that are also committee leaders form a trivial ring with just one node and pass the address of the head (here the head is themselves) to their parent. The parent then forms a new cycle by merging the (up to) two cycles (through the use of their cycle heads) it received from its two children. Note that it includes itself as a part of the cycle only if it is also a committee leader otherwise it just passes the head of the merged cycle up to its own parent. All nodes also ensure that the head of the cycle is always the node with the smallest inorder traversal number. This process continues until the protocol reaches the root, where the root puts together the two cycles from its left and right subtree (it includes itself in the $1+L$th position if it is a committee leader) and creates a contiguous cycle on $N$ nodes with the labels decided by the inorder traversal. The rest of the nodes (including the root if it is not part of $N$ nodes may now be discarded). Clearly, in $O(D)$ rounds, where $D$ is the depth of the tree, the nodes would have all organized themselves into a cycle. We know from lemma \ref{lem: tree} that the height of the tree is $O(\log n)$  Thus, 

\begin{lemma}\label{lem:committes}
There exists an $O(\log n)$ round implementation of line number  \ref{lno:cycle} of Algorithm \ref{alg:warmup-overview} that selects a set of $N \in \Theta(n/\log n)$ nodes and arranges them  as a  ring. 
\end{lemma}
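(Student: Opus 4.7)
The plan is to leverage the $O(\log n)$-height binary tree guaranteed by Lemma~\ref{lem: tree} and to exploit that tree in three consecutive phases: a bottom-up size aggregation, a top-down inorder labeling, and a bottom-up ring-merging phase. Each phase walks the tree at most once, so a uniform $O(\text{height})=O(\log n)$ round bound will follow for the whole construction. Throughout, the tree edges are the only overlay edges used for communication, which keeps the per-node message load bounded by the degree in the tree (and hence $O(1)$ per round), well within the congestion budget.

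First, I would run a standard convergecast: each leaf reports $1$ to its parent, and each internal node, after receiving subtree-sizes from both children, reports the sum plus one to its parent. After $O(\log n)$ rounds every node $v$ knows $|T_L(v)|$ and $|T_R(v)|$, where $T_L,T_R$ are its left and right subtrees. I would then run a top-down broadcast in which the root, knowing it is to receive inorder label $|T_L(\text{root})|+1$, forwards the label ranges $[1,|T_L|]$ and $[|T_L|+2,n]$ to its left and right children; each internal node repeats the same local rule using its own subtree sizes. After $O(\log n)$ additional rounds each node knows its inorder index, and the $N=k2^k\in \Theta(n/\log n)$ nodes with the smallest inorder indices can locally declare themselves committee leaders (since $N$ is globally known from the previously computed $n$).

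The final phase is the ring construction, carried out by a second bottom-up pass. I maintain the invariant that every node $v$, upon completing the recursion at its subtree, has delivered to its parent a token representing a cycle on exactly those committee-leader descendants in $v$'s subtree, in increasing order of inorder index; the token is simply the ID of the leader holding the smallest inorder index in that cycle (its ``head''), and each node in the cycle already stores overlay pointers to its clockwise and counter-clockwise neighbors. Leaves begin with either the empty token or a singleton cycle; at each internal node $v$, the two heads from $v$'s children are received, and $v$ locally splices together the two child cycles and, if $v$ is itself a leader, inserts itself at the inorder-correct position between them. Crucially, splicing two cycles requires redirecting only a constant number of overlay edges at four committee leaders (the two heads and two predecessors of the heads), and these four IDs are all carried inside the tokens, so the work at $v$ costs $O(1)$ messages. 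After $O(\log n)$ rounds the root holds a token representing a single cycle on all $N$ committee leaders, correctly ordered by inorder index.

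The main obstacle I expect is the ring-merging step: the description above quietly assumes that $v$ can, using only ID knowledge that has been propagated up the tree inside $O(\log n)$-bit tokens, instruct four remote committee leaders to re-wire their ring pointers in a single round. To make this rigorous I would enlarge the token to carry not just the head but also the head's current predecessor pointer (both IDs, which fit in $O(\log n)$ bits), so that $v$ has everything it needs to send constantly many rewiring messages to those four leaders directly. Because at any round each committee leader participates in at most one such splice (the splice at its current ancestor in the convergecast frontier), no leader is the target of more than $O(1)$ rewiring messages per round, respecting both the port and the messaging bounds. Correctness of the final ring then follows inductively from the splicing invariant, and the $O(\log n)$ round bound follows from the tree height; combining with the congestion argument yields the lemma.
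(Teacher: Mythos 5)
Your proposal follows essentially the same three-phase approach as the paper: a bottom-up convergecast to compute subtree sizes, a top-down pass to assign inorder traversal numbers and identify the first $N$ nodes as committee leaders, and a second bottom-up pass that splices child cycles (represented by their heads, i.e.\ smallest-inorder-index leaders) into progressively larger cycles until the root produces the full ring. The only difference is that you make explicit a couple of implementation details the paper leaves implicit---carrying the head's predecessor inside the token and noting that each leader participates in at most one splice per round---but the underlying argument is identical.
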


{\bf Line number  \ref{lno:butterfly} of Algorithm \ref{alg:warmup-overview}.}  Before we start constructing the butterfly network, we must preprocess the ring of committee leaders in three steps to reach a suitable state depicted in  Figure \ref{fig:grid}.  

First, the ring is arranged in the form of a 2D grid comprising $k$ columns and $2^k$ rows as shown in Figure \ref{fig:grid} (i). This is easy because each  node with inorder traversal number, say, $i$ can position itself in row $\lfloor (i-1)/k \rfloor$ and column $(i-1) \mod k$. It has to then connect with nodes with inorder traversal numbers $i-k$ (when $i-1 \not\equiv 0 \pmod k$), $i$ (when $i \not \equiv 0 \pmod k$), $i-k$ (when $i>k$), and $i+k$ (when $i < N-k$). Note that connecting with $i-k$ and $i+k$ will require $\omega(1)$ rounds, but no more than $k \in O(\log n)$ rounds. This will allow us to identify each node in the grid by the pair $(\r, \c)$, where $0 \le \r \le 2^k - 1$ is the row number and $0 \le \c \le k-1$ is the column number.

\begin{figure}[htbp]
\begin{center}
\includegraphics[clip=true,trim=30 118 30 170,scale=0.5]{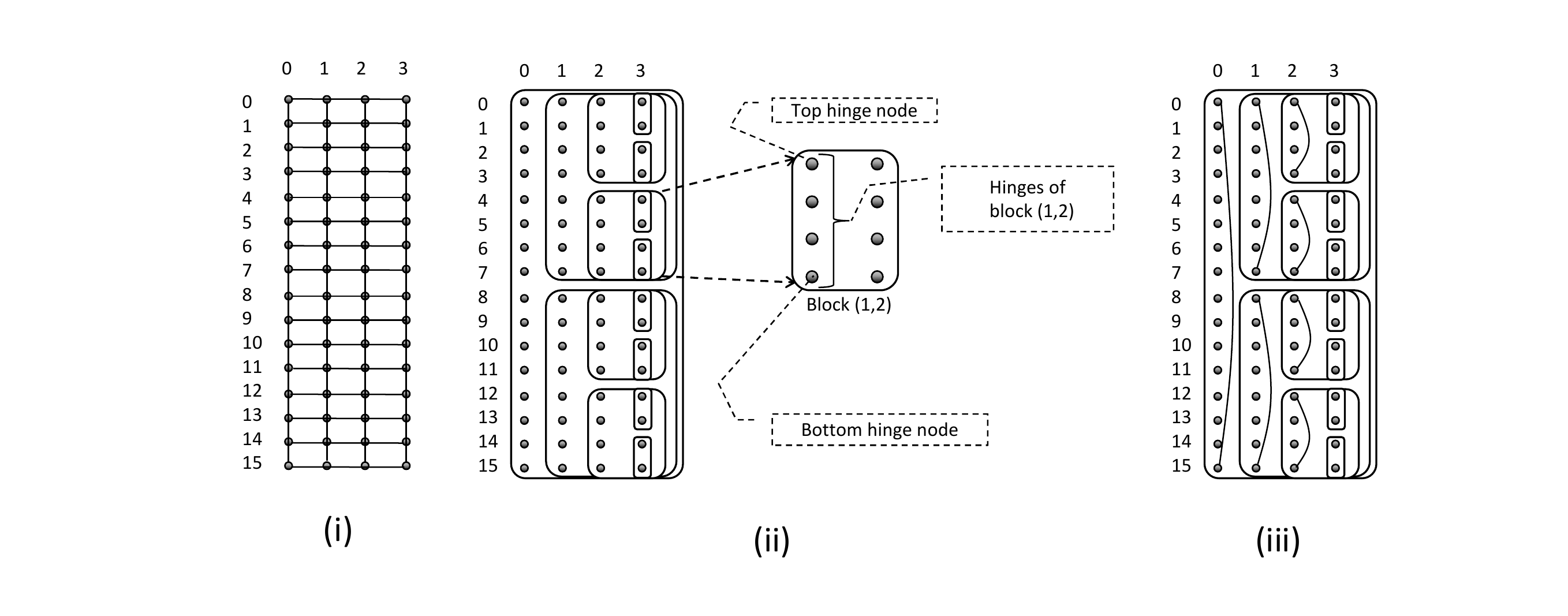}
\caption{preprocessing the cycle of committee leaders into a suitable grid structure with nested blocks that will serve as a starting point for the butterfly formation algorithm.}
\label{fig:grid}
\end{center}
\end{figure}

In the second step of the preprossessing stage,  we first structure the grid into nested blocks as defined shortly; see  Figure \ref{fig:grid} (ii). Blocks are suitably nested in rectangular subgrids of some $\kappa$ columns and $2^\kappa$ rows, $1\le \kappa \le k$. The nodes in the leftmost column of each block are called its {\em hinge nodes} or just {\em hinges}. If the hinges of a block occur in some column $\c$ in the grid, we say that the block is {\em hinged} at column $\c$. Column $\c$ in the grid  has $2^\c$ blocks hinged in it. Therefore, we refer to the $\rho$th block (from the top, starting at $\rho = 0$) hinged in column $\c$ as {\em block $(\rho, \c)$}, $0 \le \rho \le 2^\c -1$.  Let $\kappa_\c \in \{1, 2, \ldots k\}$ such that $\kappa_\c \equiv -\c \pmod k$; then, every block hinged in column $\c$ has $2^{\kappa_\c}$ rows. Block $(\rho, \c)$ thus refers to the rectangular subgrid with node $(\rho 2^{\kappa_\c}, \c)$ as the top-left node and $((\rho+1) 2^{\kappa_\c}, k-1)$ as the bottom-right node.

We will now describe the second step of the preprocessing stage in which the topmost hinge vertex must be connected to the bottom most hinge vertex in every block; see  Figure \ref{fig:grid}. Clearly, this is already true in blocks hinged at column $k-1$; this sets the base case for an inductive procedure. Consider a block $P$ (for parent) hinged at column $\c$ that has two smaller blocks $U$ (for upper) and $L$ (for lower) nested within it, both hinged at column $\c+1$. Once the top and bottom hinge nodes of $U$ and $L$ are connected via overlay edges, within a constant number of rounds ($5$ to be exact), the top and bottom hinge nodes of block $P$ can be connected in the following manner. The top hinge node of $P$ passes the request to the top hinge node in $U$ (as they are adjacent in the grid), which in turn passes the request to $U$'s bottom hinge. From there, the request is sent to the top hinge node of $L$ (which is right below $U$'s bottom hinge in the grid), which can then pass it on to $L$'s bottom hinge. From there, the intended final recipient is just one hop to the left in the grid. This means that after $5k$ (and thus $O(k)$) rounds,  ensure that the top hinge node of every block is connected to its corresponding bottom hinge node. 

The third (and final) step in the preprocessing stage is quite a small step. For every pair of blocks $U$ and $L$ hinged at the same column and nested within the same parent $P$, we ensure that their respective top hinge nodes are connected and likewise that their bottom hinge nodes are also connected. This is an easy $O(1)$ rounds step since the top and bottom hinges are connected for every block; see Figure~\ref{fig:butterfly} (i) \& (ii).

{\bf Butterfly construction.}\label{dis:butterfly} To construct the butterfly network, we perform the following steps in parallel at every parent block $P$ that has two blocks $U$ and $L$ nested within it. Consider one such example as shown in Figure~\ref{fig:butterfly} (ii) and, equivalently, in (iii), wherein, two blocks $U$ and $L$ consisting of $\kappa$ columns and $2^\kappa$ rows are initially connected at their respective top hinges and their respective bottom hinges. Number the hinges of both blocks $L$ and $U$ top to bottom from 0 to $2^\kappa-1$.  The final goal is to reach the configuration shown in Figure~\ref{fig:butterfly} (x), but an important intermediate goal is to get the $r$th pair of hinges from $U$ and $L$ connected for $0 \le r \le 2^\kappa - 1$ (shown in Figure~\ref{fig:butterfly} (viii)). One can easily achieve this in $\Theta(2^\kappa)$ rounds, but this will translate to an overall $\Theta(2^k)$ rounds when we consider the pair of blocks nested within the outermost block. But, with care, this can be achieved in $O(\kappa)$ rounds (and thus we will need $O(k)$ rounds in total). 

\begin{figure}[htbp]
\begin{center}
\includegraphics[clip=true,trim=0 0 0 0,scale=0.5]{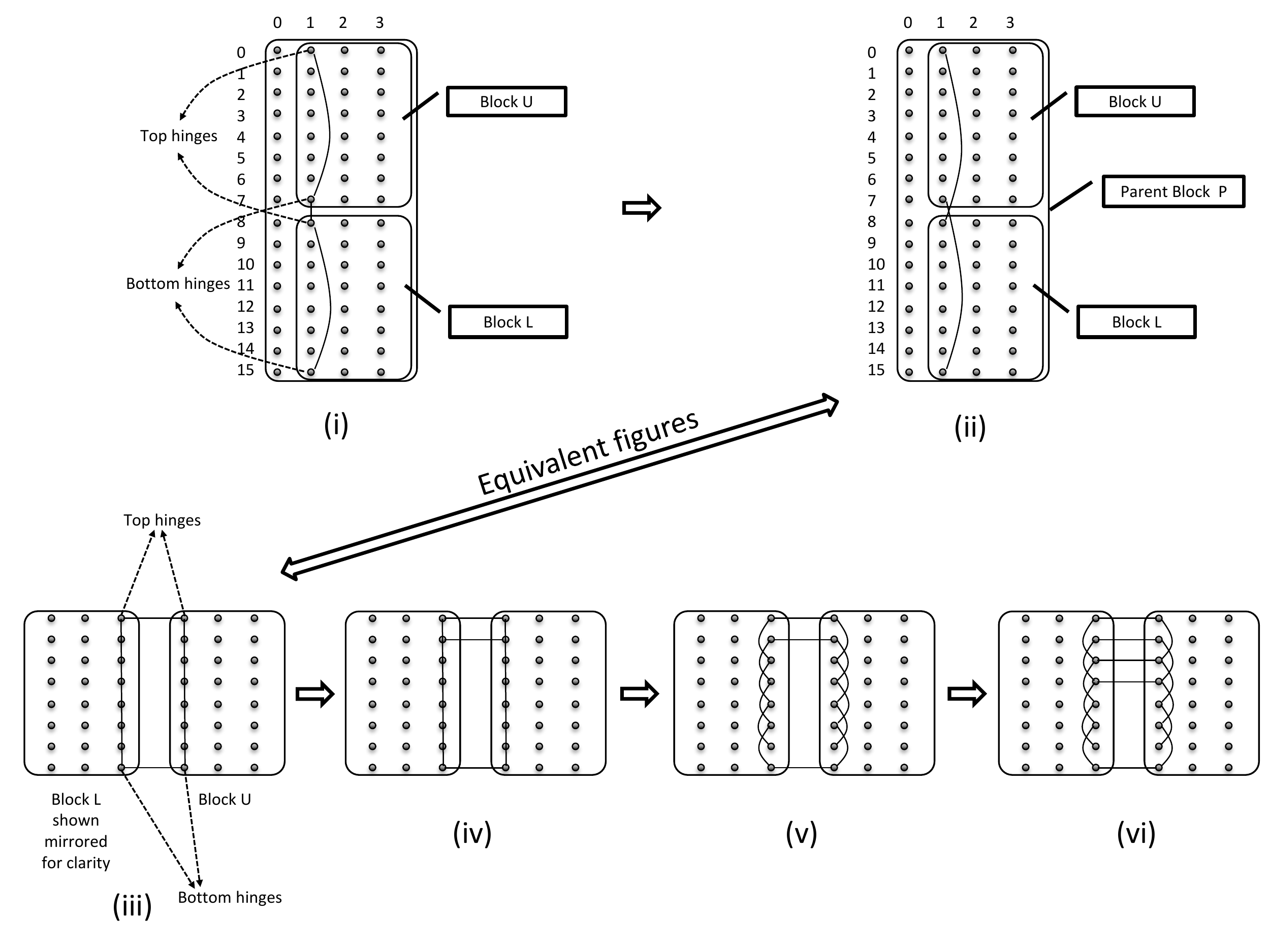}
\includegraphics[clip=true,trim=0 0 0 50,scale=0.5]{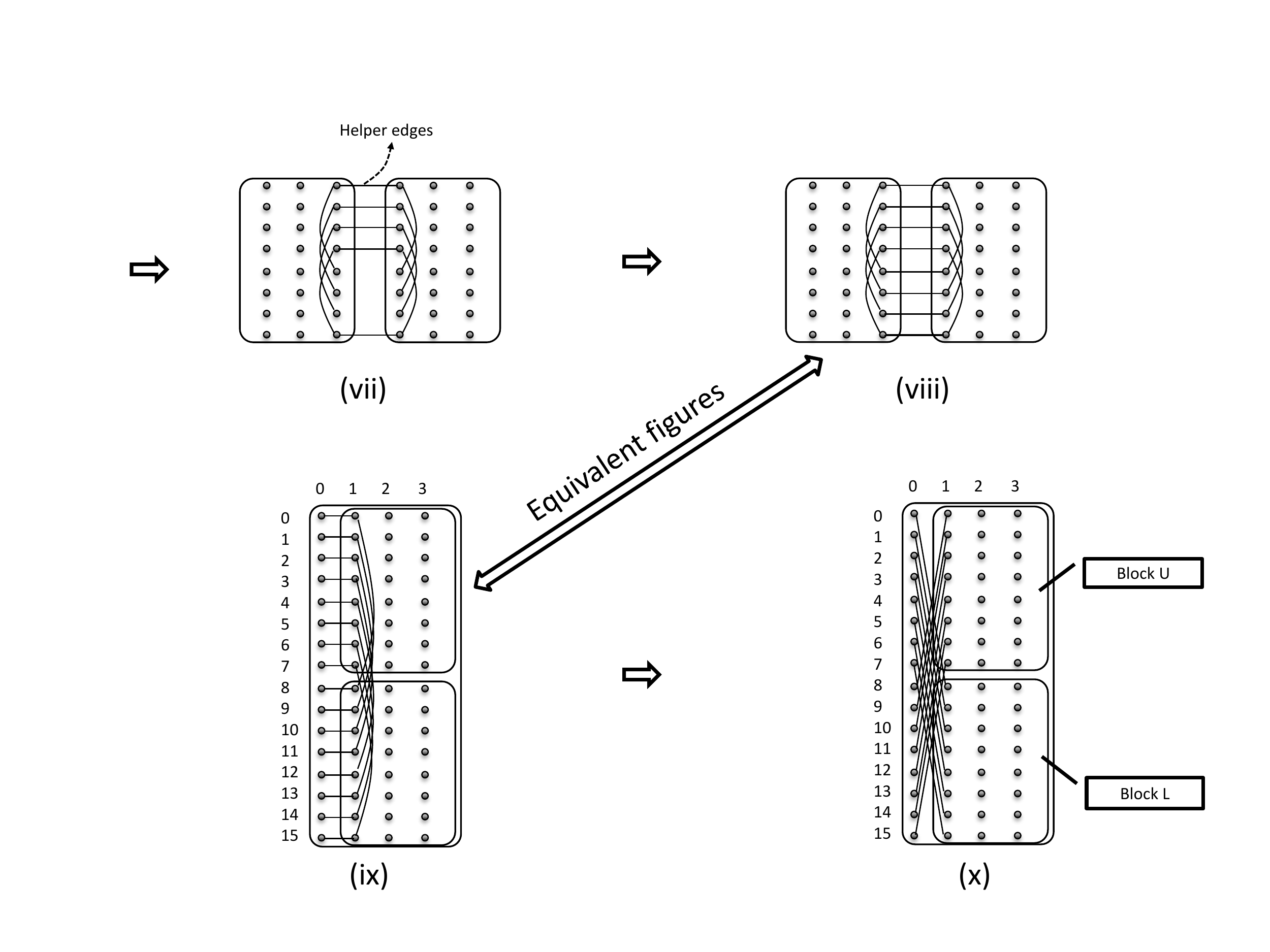}
\caption{The third (and final) step in the preprocessing stage is shown as the transformation  from (i) to (ii). The rest of the figures show the actual construction of the butterfly network. To better illustrate the progression of the butterfly construction, block $L$ is redrawn in a mirrored fashion from (iii) through (viii). In particular, (ii) is redrawn in (iii) with $L$ mirrored. Similarly, (viii) is redrawn in (ix) with $L$ drawn in the usual manner in which butterfly networks are drawn. To reduce clutter, we have not shown all the grid edges, but rather just show those edges needed to perform the next step.}
\label{fig:butterfly}
\end{center}
\end{figure}

Our approach involves $O(\kappa)$ stages (in total) with each stage requiring $O(1)$ rounds.   After each stage $i$, we ensure that the first $2^i$ pairs, i.e., $r$th hinges in $U$ and $L$ for $0 \le r < 2^i$, are connected (as shown in Figure~\ref{fig:butterfly} (iv) to (viii)); in particular, note (iii), (iv), (vi), and (viii). In addition, to help the next stage, we also ensure that in each block the $r$th hinge is connected to the $(r+2^i)$th hinge node; these are called {\em helper edges}.  In stage $i+1$, we wish to get the $r$th hinge nodes from $U$ and $L$ connected for $2^i+1 \le r \le 2^{i+1}$. This can be achieved in three steps because the $r$th node in $U$ can (via a helper edge) send its address to $(r - 2^i)$ hinge in $U$. Since $(r - 2^i)$th pair of hinge nodes are already connected in a previous stage, the address can be passed on to the $(r - 2^i)$ hinge in $L$. Again, via a helper edge, the address can be transmitted to the $r$th hinge in $L$, thereby enabling the $r$th hinge in $L$ to establish an edge with the $r$th hinge in $U$. We are not done yet because the helper edges in the next stage must have twice the reach as the helper edges in the current stage. For this purpose, we consider the two helper edges incident at some $r$th hinge in (say) $U$ leading to $r' = (r - 2^i)$th  hinge and $r'' = (r + 2^i)$th  hinge. Clearly, the $r'$th hinge can send its address to the $r''$th hinge in two steps via the two helpers we considered. Thus, $r''$th hinge can establish an edge with the $r'$th hinge as required. Of course, this will have to be repeated in parallel for all $2^i \le r \le \kappa - 2^i$ and for $L$ as well. Once the helpers for the next stage have been constructed, all other previously constructed helpers (barring those that are also grid edges) can be deleted. This will bring us to (viii) (or equivalently (ix)) in Figure~\ref{fig:butterfly}. 

Finally, to get the butterfly edges, notice that the first $2^\kappa$ hinges in $P$ (from top to the half way mark) must connect with the $2^\kappa$ hinges in $L$ and the second $2^\kappa$ hinges in $P$ must connect to the $2^\kappa$ hinges in $U$. More precisely, for $0 \le r \le 2^\kappa-1$, the $r$th hinge in $P$ must connect to the $r$th hinge in $L$. For $2^\kappa \le r \le 2^{\kappa+1} - 1$, the $r$th hinge in $P$ must connect with the $(r - 2^\kappa)$th hinge in $U$. This can be easily achieved in $O(1)$ rounds because the endpoints that must be connected are within two hops of each other. For example, for some $r$ such that $2^\kappa \le r \le 2^{\kappa+1} - 1$, there is a grid edge to the $r$th hinge in $U$ and since the $r$th hinges in $U$ and $L$ are connected, the $r$th hinge in $P$ can transmit its address to the $r$th hinge in $L$ in 2 rounds, which will then allow the $r$th hinge in $L$ to establish an edge with the $r$th hinge in $P$ in one round. 

Finally, we note from the description of the protocol that the congestion limits are not violated. Thus,
\begin{lemma}\label{lem:butterfly}
There exists an $O(\log n)$ round implementation of line number  \ref{lno:butterfly} of Algorithm \ref{alg:warmup-overview} that does not violated congestion limits and takes a ring of  $N=k2^k \in \Theta(n/\log n)$ nodes and arranges them  as a  wrapped butterfly network of $k$ columns and $2^k$ rows. 
\end{lemma}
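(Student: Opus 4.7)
The plan is to formalize the three-step preprocessing followed by the recursive butterfly construction already sketched in the discussion preceding the lemma, and verify the three required properties: correctness (it produces a wrapped butterfly of $k$ columns and $2^k$ rows), running time ($O(\log n)$ rounds), and that no congestion limit is violated at any node in any round.

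First, I would verify the preprocessing stage in three substeps. Since the nodes in the ring come from line~\ref{lno:cycle} with explicit inorder traversal numbers $1, 2, \ldots, N$, each node $i$ can locally compute its grid coordinates $(\lfloor (i-1)/k\rfloor, (i-1)\bmod k)$ and can deduce the IDs of its horizontal grid-neighbors from the existing ring edges. Forming the vertical grid edges is done by relaying IDs along the ring; node $i$ must reach node $i+k$ along the ring, which requires $k \in O(\log n)$ rounds, and since each intermediate node passes along at most a constant number of such requests per round, no congestion limit is violated. Second, the top-to-bottom hinge connections within each block are built by induction on block depth: the base case (blocks of one column) is immediate, and the inductive step relays the top hinge of the parent block $P$ through the top and bottom hinges of its two nested children $U$ and $L$ in a constant number of hops. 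Since there are $O(k)$ nesting levels and each level contributes $O(1)$ rounds, this step takes $O(k)$ rounds in total. The third substep (cross-connecting the top hinges of sibling blocks, and similarly their bottom hinges) takes $O(1)$ rounds globally.

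Next, I would analyze the main butterfly construction, which runs in parallel at every nested pair $(U, L)$ inside its parent $P$. I would argue inductively on the stage index $i$ the following invariant: after stage $i$, for every such pair, the first $2^i$ hinges of $U$ are matched one-to-one with the first $2^i$ hinges of $L$ by overlay edges, and each hinge of $U$ (resp.\ $L$) has a helper edge to the hinge at distance $2^i$ below it in the same block. The inductive step uses exactly the three-hop routing argument given in the text (via helper edge in $U$, across the already matched pair, and via helper edge in $L$), plus the doubling of helper edges via the two existing helpers at an endpoint. Each stage runs in $O(1)$ rounds, and since $\kappa \le k$, the total number of stages across all nested levels is $O(k) = O(\log n)$. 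The final step attaching the parent hinges to the child hinges is $O(1)$ rounds per nesting level.

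The main obstacle I anticipate is bookkeeping the congestion bound rather than the correctness of the construction itself. At any given stage, a single hinge node simultaneously participates in matching operations for the block it hinges and may appear as an interior node in requests routed through its grid neighborhood. I would handle this by observing that (i) within any one stage each hinge is actively involved in at most a constant number of directed communications (its own match request plus being relayed through by one sibling), (ii) the stages for different nesting levels can be run in lockstep without interference because the helper-edge endpoints within a block at stage $i$ are disjoint from those at any other stage, and (iii) the initial grid edges are reused only as a message-passing substrate carrying $O(1)$ messages per node per round. Together these imply that every node sends and receives only $O(1)$ messages per round throughout, which is strictly within the $O(\log n)$ congestion budget. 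Finally, I would verify that the edges produced are exactly those of the wrapped butterfly by tracing the matching between the $r$-th hinges of $U$ and $L$ and the induced connections between $P$'s hinges and its two children, observing that this coincides with the standard bit-flip adjacency of butterfly rows on $k$ columns and $2^k$ rows. Combining the $O(\log n)$ time bounds of the three preprocessing substeps with the $O(\log n)$ bound of the butterfly stages yields the claimed overall round complexity.
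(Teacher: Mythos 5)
Your proof plan reconstructs the paper's argument in essentially the same form: the same three preprocessing substeps (ring-to-grid embedding, top-to-bottom hinge connections by induction over nesting depth, sibling hinge cross-connections), the same helper-edge/path-doubling scheme that matches the $r$-th hinges of $U$ and $L$ over $O(\kappa)$ stages of $O(1)$ rounds each, and the same final step attaching the parent hinges to the children's hinges. The only substantive addition is your explicit congestion bookkeeping, which the paper dispatches in a single sentence; your observations (each node is involved in $O(1)$ active communications per stage, helper edges at different nesting levels live in disjoint columns, grid edges carry $O(1)$ relay traffic) are exactly what would be needed to make that sentence rigorous, so this is a welcome elaboration rather than a deviation.

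One phrase is worth tightening. You write that ``the total number of stages across all nested levels is $O(k)$''; if this were a \emph{sum} over the $O(k)$ nesting levels (with level $\c$ contributing $\Theta(k-\c)$ stages), the total would be $\Theta(k^2)=\Theta(\log^2 n)$, not $O(\log n)$. The bound you want is the one your own point (ii) actually supports: because the helper edges for a parent block hinged at column $\c$ live entirely in column $\c+1$, the constructions at distinct nesting levels touch disjoint hinge columns and can therefore run in lockstep, so the round count is the \emph{maximum} over levels, namely $O(k)$ for the outermost block. This is how the paper gets $O(\log n)$, and rewording the sentence to make the max-not-sum explicit would close the only loose thread in your argument.
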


{\bf Line number  \ref{lno:join} of Algorithm \ref{alg:warmup-overview}.}\label{dis:join}  We will reuse ideas from line number~\ref{lno:binary} to implement line number~\ref{lno:join} and, just as before, we will implement the line in two stages. The initial stage ensures that each committee gets $\Theta(\log n)$ nodes with high probability. This is done by each committee leader actively trying to recruit one member for every round for a period of $O(\log n)$ rounds. Once the first stage is finished, the second stage ensures that any unattached node can find a committee to join within $O(\log n)$ rounds. 
Let $\bgt = 3n/4N$. In the first stage, each committee will garner exactly $\bgt$ nodes in the following manner. Each committee leader uses its samples to invite one random node per round until it inducts $\bgt$ nodes into its committee. Consider a committee $C$ with leader node $c$. At each round, there are at least $n/4 - N \in \Omega(n)$ nodes that did not receive any invitation from  any other committee leader. This means that the probability $p$ with which committee leader may find a node for its committee is at least $\left(\frac{n/4-N}{n}\right)(1-1/n)^N$.  That is, when $N$ is $\frac{n}{\zeta\log n},\ \zeta \geq 1$, then $p$ is at least ($3/16$).
%
%

If we define $X_i$ to be the random variable that counts the number of rounds before a committee  leader can get its $i^{th}$ node after it got its $i-1^{th}$ one, then clearly $\sum_{i=1}^{\bgt} X_i$ gives the number of rounds required for a committee leader to get $\bgt$ nodes. Now these $X_is$ are not independent. Let's look at the construction of a similar random process, such as a coin toss in which you count the number of rounds before you get $\bgt$ heads. In this construction, a coin turns up heads with probability exactly $3/16$. Now this is a more pessimistic version of the same events described above and hence will take longer to reach the required goal of $\bgt$ heads. In the described experiment, this means that on expectation it takes at most $16/3$ rounds for the coin to turn up its $i^{th}$ head after its $i-1^{th}$ one, $1\leq i \leq \bgt$, which then means that on expectation it takes at most $\frac{16}{3}\bgt \in \Theta(\log n)$ rounds to get $\bgt$ number of heads.  We may then use Chernoff bounds to show that for any value of $n$ and any fixed value of $R,\ R \geq 6\mathbf{E}[X]$, the probability it will take more than $R\log n$ rounds to reach $\bgt$ number of heads is at most $\frac{1}{n^{R}}$ (from Theorem 4.4 \cite{MU05}). Going back to our original scenario, this means that for any committee leader it will take at most $O(\log n)$ rounds with probability $\frac{1}{n^{R}}$. We may now use the union bound to show that all committee leaders can get the required number of nodes in $\Theta(\log n)$ rounds, with probability at least $1-1/n^{R-1}$ for any fixed $R$.


After $\Theta(\log n)$ rounds, there can be still at most $n/4$ nodes left in the network that are not part of the committee. At this point, we begin the second stage, where each unattached node $u$ tries to join a committee. Each node that has become part of the committee has a budget of 1 node which it can use to induct into its committee. In the second stage, in every round an unattached node $u$ tries to probe a random node $v$ to become a part of $v's$ committee. If $v$ has not exhausted its budget then it accepts $u's$ request, otherwise $u$ tries again with a different node. Since even if all unattached nodes become attached there will still be $n/2$ nodes that would not have exhausted their budget, each un-attached node can find a committee with probability at least $1/2$. In a manner similar to above, this means that an unattached node $u$ can clearly succeed in finding a committee, whp, in $\Theta(\log n)$ rounds. Again, using union bound, we can guarantee that, whp, each unattached node can find and become a part of a committee $\Theta(\log n)$ rounds. Note that the size of any committee can at most double at the end of the second stage as described here (since each committee member can accept exactly one non-committee member), so no committee will end with more than $\Theta(\log n)$ members. Thus,
\begin{lemma}\label{lem:comsize}
There exists an implementation of line number  \ref{lno:join} of Algorithm \ref{alg:warmup-overview} that (whp) requires $O(\log n)$ rounds and ensures that every committee  gets  $\Theta(\log n)$ member nodes. 
\end{lemma}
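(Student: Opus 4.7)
The plan is to execute the two-stage recruitment protocol already sketched in the discussion preceding the lemma and to argue about its success probability carefully enough to invoke a union bound over all $N = \Theta(n/\log n)$ committees and all unattached nodes.

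In the first stage, I would fix the target $\bgt = 3n/(4N) \in \Theta(\log n)$ and have each committee leader repeatedly sample one random node per round (from its seed random IDs) until its committee reaches size $\bgt$. The key quantity is the per-round success probability $p$ that a particular committee leader inducts a new node: conditioned on worst-case behavior, at least $n/4 - N \in \Omega(n)$ nodes remain uninvited by any other leader in that round, and the probability that the chosen target is not also targeted by another leader is at least $(1 - 1/n)^N$. With $N \in \Theta(n/\log n)$, this gives $p \ge 3/16$ (for suitable $\zeta$). Thus I can stochastically dominate the number of rounds required by a single leader via a sequence of $\bgt$ independent Bernoulli$(3/16)$ trials counting rounds-till-$\bgt$-successes (negative binomial), whose expectation is $O(\log n)$. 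A Chernoff bound (Theorem~4.4 of~\cite{MU05}) shows this exceeds $R \log n$ with probability at most $n^{-R}$ for any constant $R$, and a union bound over all $N$ leaders gives the first stage success whp.

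In the second stage, I would bound the number of still-unattached nodes by $n - N \bgt \le n/4$ and let each such node probe one random target per round, joining its committee if that target has not yet used its residual budget of one induction. Because even in the worst case at most $n/4$ targets exhaust their budget, at least $n/2$ valid targets remain in any round, so each probe succeeds with probability at least $1/2$. Hence the number of rounds any given unattached node needs to find a home is stochastically dominated by a Geometric$(1/2)$ random variable, which is $O(\log n)$ whp; a union bound across at most $n/4$ unattached nodes completes the stage. The final committee sizes lie in $[\bgt, 2\bgt] = \Theta(\log n)$ since each first-stage member accepts at most one second-stage node.

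The main obstacle is the per-round probability analysis in the first stage: the success events for different leaders in the same round are not independent (two leaders may pick the same target), and the conditional success probability degrades as more nodes are already committed. My plan is to handle this by (i) conditioning on the worst-case state where $N\bgt$ nodes are already committed, which is a pessimistic bound since it happens only at the very end, and (ii) observing that each leader's trials form a dominated negative binomial process, so Chernoff-plus-union-bound works cleanly. The congestion constraint (no node receives more than $O(\log n)$ messages per round) follows from the standard balls-into-bins maximum-load argument, exactly as invoked in the proof of Lemma~\ref{lem:binary}.
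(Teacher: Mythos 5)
Your proposal matches the paper's own argument almost step for step: the same two-stage recruitment with budget $\bgt = 3n/4N$, the same per-round success bound $p \ge 3/16$ in stage one handled by stochastic domination against i.i.d.\ Bernoulli$(3/16)$ trials (the paper phrases this as a pessimistic coin-toss comparison) followed by Chernoff and a union bound over leaders, the same $n/4$ bound on residual unattached nodes and $1/2$ per-probe success rate in stage two, and the same observation that committees at most double so sizes stay $\Theta(\log n)$. Your added remark on congestion via balls-into-bins is a reasonable supplement but otherwise the route is the paper's.
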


{\bf Line number  \ref{lno:formccg} of Algorithm \ref{alg:warmup-overview}.}  Since all nodes know their respective committee leader's ID, they will send their own ID to their leaders. The committee leader will form a list of all nodes in the committee and communicate this to all nodes in its own committee. The individual nodes will then establish overlay edges with all other members in the committee.  At this point, the committee leader can construct the $\ccg$ and disseminate to all nodes in the committee

{\bf Line number  \ref{lno:formedges} of Algorithm \ref{alg:warmup-overview}.}  
The committee leaders exchange $\ccg$s with their neighbors in the butterfly network. Each committee leader in turn passes on the $\ccg$s it received to nodes in its own committee.  If $C_1$ and $C_2$ are two neighboring committees in the butterfly network, then the logical edge between $C_1$ and $C_2$ comprises overlay edges between every pair in $C_1 \times C_2$; here we are abusing notation and using $C_1$ and $C_2$ to refer to the committees as well as the set of nodes that make up each resp. committee.


\section{Maintaining Spartan} \label{sec:maintain}
We now turn our attention to the maintenance phase when churn can significantly impact the network. Maintaining Spartan comprises two parts:  the primary maintenance protocol and the supporting random walks based sampling protocol. These two interdependent  parts execute in tandem.  

Recall that each committee $C$ is identified by its row and column numbers in the Spartan structure. Thus we can always pick a random committee just by choosing a random row and column in the Spartan structure. However that will not be useful for communicating to that random committee because for some node $u$ to send a message to members of $C$, $u$ will require $C$'s current committee members list \cml(C). The random walks based sampling procedure that we provide ensures that each node  gets $\Theta(\log \log n)$  \cml s of randomly chosen committees every $\Theta(\log \log n)$ rounds. Using these random \cml~samples, the maintenance protocol ensures that Spartan is robust (whp) for an arbitrarily large $\poly(n)$ number of rounds.

\subsection{Random Walks Based Sampling} \label{sec:sampling-log}
We now show how each committee in Spartan can obtain the \cml s of $\Theta(\alpha \log n)$ uniformly and independently chosen committees using random walks for any $\alpha \in O(\plog(n))$;  in Section~\ref{subsec:maintain}, we set $\alpha = \log \log n$ in order to obtain $\Theta(\log n \log \log n)$ \cml~samples.   

Before we describe the procedure, we clarify a couple of key issues. Firstly, this sampling procedure assumes that Spartan is robust. Secondly, the samples produced by this procedure comes with a shelf life of $O(\log \log n)$ rounds because eventually, the nodes in the committees either move away to other committees or just churned out. To formalize this, we say that a sample $\cml(C)$ is valid  at a given round $r$ if the intersection between the list of IDs in $\cml(C)$ and the IDs of nodes in $C$ at round $r$ is of cardinality at least $\Omega(\log n)$. We later show that (whp) Spartan committees retain at least $\Omega(\log n)$ of their members for $\Omega(\log \log n)$ rounds (see Lemma~\ref{lem:robust}).

Normally, the random walks would require $O(\log n)$ steps to reach a random node, but we adapt the well-known pointer doubling technique~\cite{jaja,DGS16} to our context to achieve an exponential speedup of $O(\log \log n)$ steps. The steps are described in Algorithm~\ref{alg:sampling} under the assumption that Spartan is robust during its execution, which is reasonable because the sampling procedure is designed to work in tandem with the maintenance protocol that will guarantee robustness.

\begin{algorithm}
	\caption{Random walks based Sampling. 
	}\label{alg:sampling}
	\begin{algorithmic}[1]
		\small
		
		\item[{\bf Note: }] For clarity, the protocol is described from the perspective\\ of a single arbitrary committee $C$, but the protocol must be simultaneously executed by all committees. For simplicity, the round number starts at 1.
		
		\REQUIRE The committee $C=(\r,\c)$ generates $L=\Theta(\alpha \log^3 n)$ random walk tokens for some $\alpha \in \Omega(1)$. Each token $t$ has three data fields: a source committee field $t.src$, a destination committee field $t.dst$, and a random number field $t.rand$. For each $t$ generated by $C$, $t.src$ contains  $\cml(C)$, i.e., its source committee.  The random number field $t.rand$ is drawn from the geometric distribution with parameter $1/2$. The fields $t.src$ and $t.rand$ are immutable. The destination committee changes over time and its $\cml$  is stored in $t.dst$.
		
	\item[{\bf Invariant:}] Each token $t$ must be stored in both its source and destination committees. More precisely, each member node currently in the lists $t.src$ and $t.dst$ must contain a copy of $t$. Moreover, the lists $t.src$ and $t.dst$ must be as updated as possible. \COMMENT{One (handshake) communication round will be required to get the latest $\cml$'s from $C'$ and $C''$ and to ensure that $t$ is stored in both $t.src$ and $t.dst$.} \label{lno:invariant}

		\ENSURE In every row $\r$, the committees in row $\r$ are assumed (in  line~\ref{lno:col}) to know each others' $\cml$'s. This preprocessing step can be achieved by a combination of flooding and path doubling and  can be achieved in   $\log k \in  O(\log \log n)$ rounds. 
		 
		\item[\noindent {\bf At Round $r=1$:}]

		\STATE Let $C'$ and $C''$ be the two neighbours of $C$ at column $(\c+1) \pmod k$. Then, $t.dst$ is initialized to either $\cml(C')$ or $\cml(C'')$ chosen uniformly and independently at random. Ensure that the invariant is maintained. 
		
		\FOR{{\bf round numbers} $r=2$ to $r=1+\log k  \in O(\log \log n)$}
			\STATE Let $F$ be the set of tokens such that for each $f \in F$, $f.dst = \cml(C)$ and $f.rand \ge r$. Let $S$ be the set of tokens such that for each $s \in S$, $s.src = \cml(C)$ and $s.rand = r-1$. 
			\STATE Match each $f \in F$ with a unique $s \in S$ chosen uniformly at random (as long as such a unique $s$ is available). If $|F| \ne |S|$, there will be some unmatched tokens in either $F$ or $S$ (depending on whether $|F| > |S|$ or $|S| > |F|$) that are unmatched. Discard all such unmatched tokens in $F$ or $S$. Thus, for rest of the current iteration, $|F| = |S|$.
			\STATE For each matched pair $f$ and $s$, reassign $f.dst \leftarrow s.dst$. Ensure  invariant in line number~\ref{lno:invariant} for $f$. \COMMENT{This will require one (handshake) communication round to ensure that the updated $f$ is sent to $f.src$ and $f.dst$.}
			\STATE Discard tokens in $S$.
			\STATE $L \leftarrow L/4$ .
		\ENDFOR{\{At this point, the destination committee of each surviving token $t$ (i.e., $t$ has not been discarded so far) has a row number that is chosen uniformly at random from $[1, 2^k]$, but its column number is not random. We will fix this in the following loop.\}}
		\FOR{each token $t$ with $t.dst = \cml(C)$}
			\STATE $t.dst \leftarrow \cml(C')$ where $C'$ is chosen uniformly at random from the set of committees in row $\r$ (where $C$ belongs). \label{lno:col} 
		\ENDFOR{The committee $C$ can now use the destination $\cml$'s of all surviving tokens with source committee $C$ as random samples.}
		\normalsize 
	\end{algorithmic}
\end{algorithm}
\begin{theorem}\label{thm:sampling}
	The sampling procedure described in Algorithm~\ref{alg:sampling} where each committee initiates $\Theta(\alpha \log^3 n)$ tokens conforms to the model described in Section~\ref{sec:model}.  Furthermore, the following two statements hold.\\
{\bf 1.} For any token $t$ (chosen when tokens were first generated), if $t$ survived all the iterations  in Algorithm~\ref{alg:sampling}, then the final destination  is equally likely to be any of the $N$ committees in Spartan.  \\
{\bf 2.}  With high probability, for every committee $C$, the number of tokens with $C$ as the source and the number of tokens with $C$ as the destination  are both  at least $\Omega(\alpha \log n)$.
	\end{theorem}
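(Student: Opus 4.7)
The plan is to establish the three claims of the theorem in order. Model conformance is a routine check: each message is an $\cml$ piece of size $O(\log^2 n)$, and a committee handles at most $O(L) = O(\alpha \log^3 n)$ tokens per iteration spread across its $\Theta(\log n)$ members, so per-node message volume stays within $O(\plog(n))$; the row-wide preprocessing that shares $\cml$s along each row is standard path-doubling along the column cycle in $O(\log k)$ rounds, which respects both congestion and handshake requirements of Section~\ref{sec:model}.

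For statement 1 (uniformity), I would prove by induction on the iteration index $r$ the following marginal invariant: for any surviving token $t$ with $t.src = C$, the distribution of $t.dst$ at the end of iteration $r$ is exactly that of the endpoint of a uniformly random forward walk of $2^{r-1}$ steps in the wrapped butterfly starting from $C$. The base case $r=1$ follows from the initialization (a single uniformly random forward step to a column-$(\c+1)$ neighbour). For the inductive step, the pointer-jumping mechanic matches $t$ with a uniformly random $s \in S$ satisfying $s.src = t.dst$ and $s.rand = r-1$; applying the induction hypothesis to $s$, the value $s.dst$ is the endpoint of a uniformly random forward walk of $2^{r-2}$ steps starting from the current $t.dst$, so the updated $t.dst$ is the endpoint of a length-$2^{r-1}$ walk from $C$. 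With $r = 1 + \log k$ the walk has length $k$, and a standard bit-fixing argument for the wrapped butterfly shows that $k$ forward steps re-randomize every bit of the row coordinate while returning the column to that of $C$. Line~\ref{lno:col} then re-randomizes the column uniformly using the row-preprocessed $\cml$s, giving a uniform final destination over all $N$ committees.

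For statement 2 (concentration), I would combine statement 1 with Chernoff and careful accounting of surviving tokens. A token survives all iterations only if its immutable field satisfies $t.rand \geq 1 + \log k$, which holds independently with probability $(1/2)^{\log k} = 1/k$, so even before match-failure losses each committee has $L_0/k = \Theta(\alpha \log^2 n)$ tokens in expectation with source $C$ and correct $rand$ field. For match losses at a committee $C$ in iteration $r$, both $|F|$ and $|S|$ are essentially sums of Bernoulli variables indexed by independent $t.rand$ values and (at $r=1$) independent initial destination choices; by the symmetry of the butterfly and the geometric distribution their expectations agree up to a $1 \pm o(1)$ factor, and a Chernoff bound with a union over the $O(\log \log n)$ iterations and the $N$ committees shows $|F|$ and $|S|$ remain within $(1 \pm o(1))$ of their common expectation whp. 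Hence only an $o(1)$-fraction of eligible tokens is discarded per iteration, leaving $\Omega(\alpha \log n)$ surviving source tokens per committee whp. On the destination side, statement 1 guarantees each surviving token's destination is marginally uniform over the $N$ committees, so a balls-into-bins style Chernoff argument plus a union bound over committees completes the proof.

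The main obstacle is that the destinations of different surviving tokens are not mutually independent: two tokens can become correlated by being paired with the same $s$ at some iteration, or more generally by sharing intermediate hops in the pointer-jumping history. My plan is to sidestep this by conditioning inductively on the filtration up to iteration $r-1$, so that at iteration $r$ the uniform matching $f \leftrightarrow s$ is drawn from a pool whose composition depends only on past randomness, and then to invoke a martingale concentration bound (method of bounded differences over the independent $(t.rand, t.\text{initial-choice})$ vector) to recover Chernoff-style tails for the destination-side balls-into-bins count. Once this conditioning structure is set up, the remainder of the proof is routine expectation calculation and a union bound over $O(\log \log n) \cdot N$ bad events, giving the claimed $\Omega(\alpha \log n)$ bound simultaneously on sources and destinations whp.
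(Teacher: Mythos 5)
Your proof of statement 1 follows essentially the same route as the paper: induct on the iteration index to show that each surviving token's destination is distributed as the endpoint of a length-$2^{r-1}$ random forward walk in the butterfly, then use the bit-fixing property of the wrapped butterfly to conclude row-uniformity after $\log k$ iterations, and the final in-row hop for column-uniformity. The inheritance step ``$t.dst \leftarrow s.dst$ concatenates two independent length-$2^{r-2}$ walks'' is exactly the paper's argument.

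For statement 2 your route differs in two ways, and the differences are worth noting. First, your survivor count is computed directly: a token survives all $\log k$ iterations only if $t.rand \ge 1+\log k$, which happens with probability $1/k$, so each committee expects $\Theta(\alpha\log^3 n)/k = \Theta(\alpha\log^2 n)$ source survivors before accounting for match failures. The paper instead tracks a conservative invariant $L \leftarrow L/4$ per iteration, which already folds in both the geometric-field halving and a worst-case constant-fraction matching loss, and yields the weaker (but still sufficient) $\Omega(\alpha\log n)$. Your sharper estimate is fine, but it makes your argument rely on the per-iteration matching loss being $o(1/\log\log n)$; this does hold (Chernoff deviation $\sqrt{\log n/\mu}$ with $\mu=\Theta(\alpha\log^2 n)$ beats $1/\log\log n$ for any $\alpha=\Omega(1)$), whereas the paper's Claim~\ref{clm:bounds} only needs a fixed $\epsilon$-window and so is more slack-tolerant. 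Second, for the dependency problem you explicitly flag — destinations of different tokens are correlated through shared matching partners — you propose a filtration-conditioned martingale / bounded-differences argument over the independent $(t.rand, \text{initial choice})$ vector, while the paper invokes negative association via the balls-and-bins lemmas (Lemmas~\ref{lem:bbm}, \ref{lem:dma}) and then applies a generalized Chernoff bound. Both routes address the same obstruction; the negative-association route is arguably more self-contained given the balls-and-bins framing already set up in Lemma~\ref{lem:binary}, whereas your martingale route would need care in showing that a single-coordinate change has bounded effect on the final counts (a flip in one token's $rand$ or initial choice can cascade through the matchings), which is the one place I would want to see worked out before calling the plan complete. Overall: same skeleton, a tighter expectation computation, and a different but plausible concentration tool.
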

\begin{proof}

	We first prove statement 1 followed by statement 2. The correctness and conformity to the model specifications follow suit.
	
	To prove statement 1, we first recall 
	a basic fact about butterfly networks. Consider a walker in possession of the binary representation of some $\r^* \in [1, 2^k]$. Let us suppose she performs the following bit-fixing moves $k$ times from her current position $(\r, \c)$. For any $i=0\ldots k$ bit fixing moves, let $j=((i+\c)\mod n)+1$ (using the decimal representation of $\c$). At any step $i$ in the bit fixing process, if the $j$th bit of $\r$ and $\r*$ are equal, she moves along the direct edge; otherwise, she moves along the flip edge.

	 Using the above procedure, at the end of $k$ moves, she will be in row $\r^*$~\cite{MU05}. Thus, when we perform  the same bit-fixing procedure with a random $k$ bit vector instead, we will reach a random row. It now suffices to show that any token $f$ that survived till the end of the first \textbf{for} loop in Algorithm~\ref{alg:sampling} performed the required $k$ random bit fixing steps. (The random column is obvious from the second \textbf{for} loop.) During round 1, clearly every token  has taken a random bit-fixing step, i.e., either along the direct edge or the flip edge. For the sake of induction, let us now suppose  that at the end of round $r$, every token  has made $2^{r-1}$ random bit fixing steps; we have already established the basis for this. During round $r+1$, $f$  is matched with another token, say $s''$, both of which have fixed $2^{r-1}$ bits (see Figure~\ref{fig:tokens}). Thus, when $f$ is matched with $s''$,   $f$ inherits the bit-fixing steps performed by $s''$, and therefore, the updated $f$ has fixed $2^r$ bits from its source column by the end of round $r+1$. 
	
	\begin{figure}[htbp]
  \begin{center}
    \includegraphics[width=0.4\textwidth,clip=true,trim=50 182 70 162]{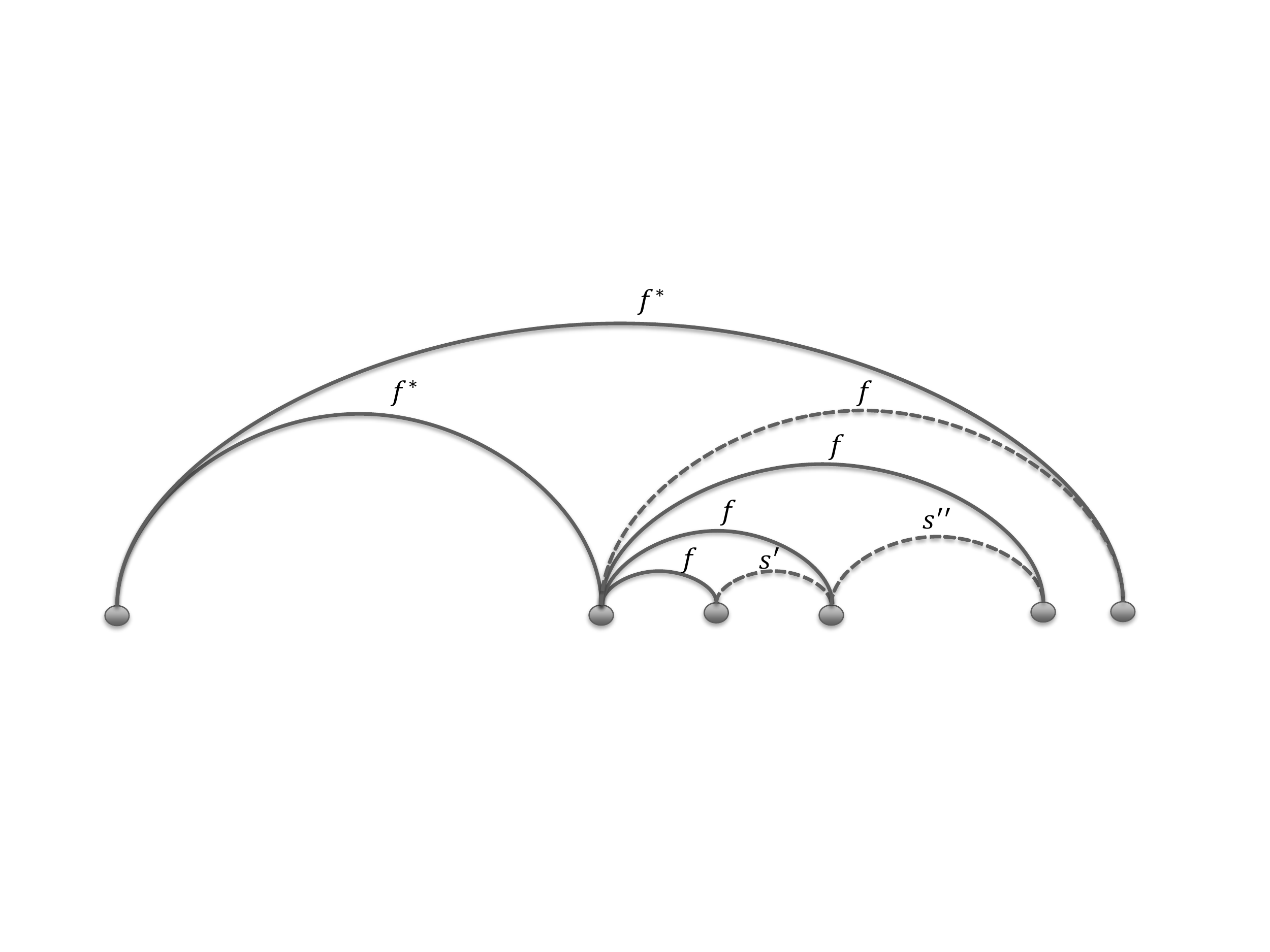}
  \end{center}
  \caption{Schematic showing path doubling of a token $f$.  The tokens shown in solid lines play the role of the first token in the matching of tokens, while those shown in dotted lines are the second. As long as $f$ is the first token, it is retained for subsequent doubling, but once it is used as a second token (with $f'$ as the first token in the figure), it is discarded.}\label{fig:tokens}
\end{figure}

We now turn our attention to statement 2. Note that, in every round $r$ for every $C \in \C$, at the start of every round, each committee has the same value of $L$, as each of them started with the same number of tokens and the value of $L$ is the same.

\begin{claim} \label{clm:bounds}
The following bounds hold with high probability at the start of every round $r$. For any fixed $\epsilon >0$\\
1. The number of tokens with $C$ as  source and $rand \ge r$ is within $[(1 - \epsilon) L /2, (1+ \epsilon) L /2]$.\\
2. The number of tokens with $C$ as  source and $rand = r-1$ is within $[(1 - \epsilon) L /2, (1+ \epsilon) L /2]$.\\
3. The number of tokens with $C$ as  destination is within  $[(1 - \epsilon) L, (1+ \epsilon) L]$.\\
4. The number of tokens with $C$ as  destination and $rand \ge r$ is within $[(1 - \epsilon) L /2, (1+ \epsilon) L /2]$.\\
5. The number of tokens with $C$ as  destination and $rand = r-1$ is within $[(1 - \epsilon) L /2, (1+ \epsilon) L /2]$.
\end{claim}

\begin{proof}
 The first two bounds are straightforward applications of Chernoff bounds because each of $L$ tokens with $C$ as source is independently and equally likely to have $rand \ge r$ or $rand = r-1$; recall that the rand values follow the geometric distribution with parameter 1/2 and all tokens with $rand<r-1$ have already been discarded. Once bound 3 is proved, bounds 4 and 5 will also follow in a fashion similar to 1 and  2. So we will now focus our efforts on bound 3. 

At the start  of round $r$, we know that the following two statements are true of any surviving token i) it has walked for $2^{r-2}$ steps. ii) Because of the inherited bit fixing in the first \textbf{for} loop of the algorithm \ref{alg:sampling}, the destination of a surviving token is equally likely to be any of $2^{2^{r-2}}$ committees. Thus by symmetry, if you fix a committee $C$, the possibility of $C$ being a token's destination is $1/2^{2^{r-2}}$. Thus, since each committee in a round $r$ would be propagating at least $L$ tokens (Note that since $L$ is constantly updated due to step 7, it is thus is a lower bound for both $|F|$ or $|S|$), there are thus a total of at least $L2^{2^{r-2}}$ tokens. Which means that on expectation at least $L$ tokens will have their destination as $C$, we may then use standard Chernoff bounds~\cite{MU05} to prove bound 3. Bounds 4 and 5 follow in the same manner as bounds  1 and 2. Tokens with $C$ as destination are equally likely to have $rand\geq r$ or $rand=r-1$. Thus, following bound 3, we can use Cheronoff bounds to show that bounds 4 and 5 follow. 
\end{proof}
The rest of the proof is conditioned on the bounds stated in Claim~\ref{clm:bounds}. To complete the proof of the theorem, we need to show for  every committee $C$ there will (whp)  be  at least $L/4$ tokens that have $C$ as source at the end of one iteration of the algorithm. Once this is shown, we know that, after $O(\log \log n)$ rounds, the number of tokens will (whp) be  at least $\Omega(\frac{\alpha \log^3 n}{4^{\log \log n}}) = \Omega(\alpha \log n)$, which will complete the proof.

Let's look at the very last round $r$ in the first \textbf{for} loop, that is the very last matching of tokens. Let $\Psi$ be the set of tokens with $C$ as source  and $rand \ge r$; we know $|\Psi| \ge (1 - \epsilon) L /2$ (holds true due to the inductive nature of Claim 13.1). For any token $t \in \Psi$, the destination in round 
$r$ is going to be a committee $C'$, which had been set by inheriting the bit fixing steps of a token $t'$ in the previous round. Now at $C'$, any token $t$ that is not matched with some $t''$ is going to be discarded. What is the probability of $t$'s survival? We know that there can be at most $(1+\epsilon)L/2$ tokens at $C'$ whose destination is $C'$ and we also know that there at least $(1-\epsilon)L/2$ tokens at $C'$ whose source is $C'$ and whose random number is $r-1$ (because of claim 13). Thus for $t$ to not survive, it should be discarded at $C'$, the probability of which is exactly $\delta=(1-\epsilon)/(1+\epsilon)$. Notice that the dependencies are negatively correlated, i.e., when we know that some token is discarded (at $C'$) it only decreases the probability that some other token is discarded (as it increases the probability that the second token is matched). To see how, we may again look at the token matching phenomenon as a balls and bins scenario, in which the tokens that arrive at $C'$ are the balls and the tokens whose source is $C'$ are the bins. Again using the properties of \ref{lem:bbm} and \ref{lem:dma}, we can argue negative correlation. Thus after the the last matching, the probability that a surviving token $t \in \Psi$ is not discarded is at least $\frac{2\epsilon}{(1+\epsilon)}$. We can then calculate the expected number of tokens that survive from $\Psi$ as at least $\frac{2\epsilon}{1+\epsilon}\frac{(1-\epsilon)}{2}L=\frac{\epsilon(1-\epsilon)}{1+\epsilon}L$. Thus, by applying the generalized Chernoff bounds from~\cite{MU05}, 
\begin{align*}
\mathbf{Pr}(\text{More than $L/4$ tokens are discared}) &\leq \mathbf{exp}\left\lbrace-\frac{\frac{\epsilon(1-\epsilon)}{1+\epsilon}L {\left(\frac{(1+\epsilon)(5\epsilon-1)^2}{4\epsilon(1-\epsilon)}\right)}^2}{4} \right\rbrace
 \\
&\leq \textbf{exp}\left\lbrace\frac{-(\epsilon+1){(5\epsilon -1)}^2}{ (1-\epsilon^2)}L/16\right\rbrace
\end{align*}

we can prove that the probability that more than a fourth of tokens will be discarded is at most $\textbf{exp}\left\lbrace\frac{-(\epsilon+1){(5\epsilon -1)}^2}{1-\epsilon^2}L/16\right\rbrace$. Thus with high probability,  at least $L/4$ tokens survive. Any such surviving token then inherits a final bit fixing step due to step 8 and since there are no possibilities of discarding the tokens thereafter, all of the tokens that have survived steps 1 through 7 end up at $C$ with a destination chosen uar from the butterfly network. 
This also concludes the proof of statement \ref{thm:sampling}, thus proving that each committee can have $O(\log n)$ tokens at the end of Algorithm \ref{alg:sampling}. 
\end{proof}

During the maintenance phase, this sampling procedure must be initiated repeatedly every $\Theta(\log \log n)$  rounds with $\alpha = \log \log n$. Thus, each committee  gets $\Theta(\log n \log \log n)$ tokens every $\Theta(\log \log n)$  rounds and they can be distributed among the $\Theta(\log n)$ nodes in the committee.
There is one subtlety that must be resolved. How do the nodes within a committee share the  tokens?  We assume that the list of members in the committee is common knowledge among all the members. The node with the smallest ID then divides up the tokens into chunks of $\Theta(\log \log n)$ tokens and sends each chunk to a member so that every member has at least one chunk.

\emph{Remark}: This random walks sampling procedure may require nodes to send/receive $\omega(\log n)$ messages (but not more than $O(\plog(n))$ per round). We believe this is an inherent bottleneck that we cannot avoid when each committee requires $\Theta(\log n)$ random walks within a period of $O(\log \log n)$ rounds. To see this, notice that the total number of random walk samples needed is $O(n\log n)$ and each of them -- to mix properly -- must have walked $\Omega(\log n)$ steps. This means that the network must send/receive at least $\Omega(n\log^2 n)$ messages within a period of $O(\log \log n)$ rounds, which, by the pigeonhole principle, will require nodes to send/receive $\omega(\log n)$ messages per round. This however does not preclude the possibility of a more complicated sampling algorithm that may work with no more than $O(\log n)$ messages per node per round.

\subsection{Maintaining the  Spartan Implementation}\label{subsec:maintain}
We now turn our attention to maintaining Spartan. At the end of the bootstrap phase, our network has $N = k2^k \in \Theta(n/\log n)$ committees. Our goal in the maintenance phase is to ensure  that all such committees and logical edges are robust (whp) for $\poly(n)$ number of rounds despite a churn of up to $(\epsilon n, \Theta(\log \log n))$. 

 {\bf Sampling Cycles.} Throughout the maintenance phase for as long as Spartan is robust, all the committees repeatedly execute the random walks based sampling procedure in unison once every $\Theta(\log \log n)$ rounds. They set $\alpha = \Theta(\log \log n)$. Consequently, every committee will obtain $\Theta(\log n \log \log n)$ samples every $\Theta(\log \log n)$ rounds. These samples will remain valid for the next $\Theta(\log \log n)$ rounds during which they will be consumed. These sampling cycles are illustrated in Figure~\ref{fig:sampling-cycle}.

\begin{figure}[htbp]
	\begin{center}
		\includegraphics[scale=0.45,clip=true,trim=60 240 0 210]{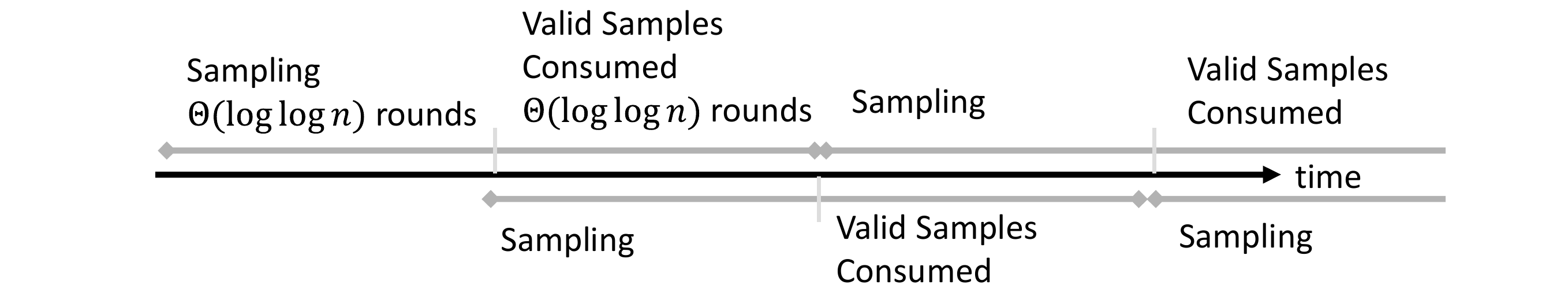}
		\caption{Sampling cycles run in the background for as long as all the committees are robust.} \label{fig:sampling-cycle}
	\end{center}
\end{figure}

With the sampling cycles executing repeatedly in the background, the nodes perform their maintenance protocol (see Algorithm~\ref{alg:maintainence} for an event-driven pseudocode). At a high-level, the nodes move to new random committees every few sampling cycles. 

{\bf Node-level Invariants.} Throughout the maintenance phase, we need to maintain two crucial invariants (mentioned below) for every node in the network (after its first $\Theta(1)$ rounds). At a high-level, the sampling cycles ensure that the first invariant is guaranteed and the first invariant ensures that the nodes can maintain the second invariant. The second invariant then ensures that all the committees (and therefore Spartan as a whole) will be robust (whp) against a $\Theta(\log \log n)$-late adversary, which in turn ensures that sampling will work, thereby completing the cycle of dependencies.  See Figure~\ref{fig:invariants} for an illustration.

\begin{figure}[htbp]
	\begin{center}
		\includegraphics[scale=0.6,clip=true,trim=60 240 35 190]{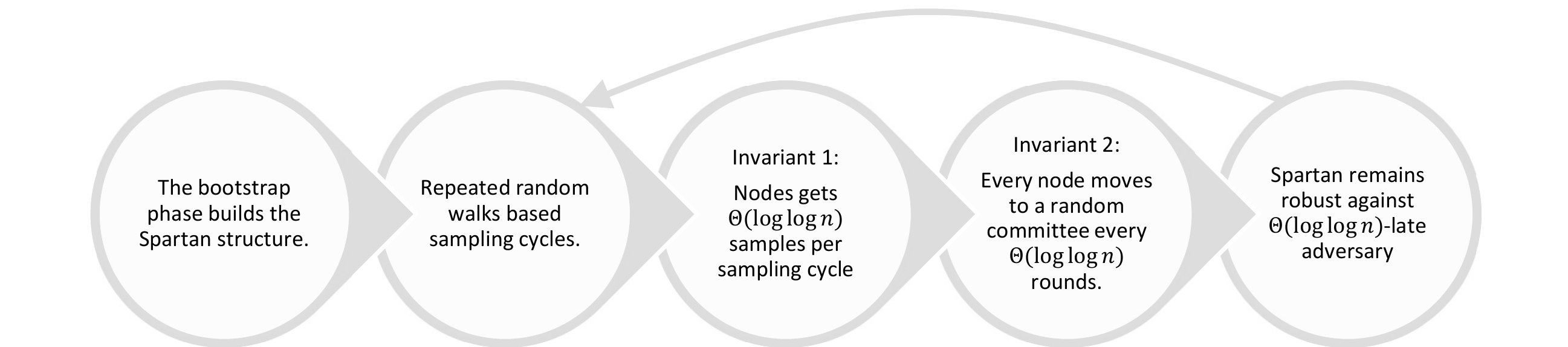}
		\caption{Interaction between sampling, the node-level invariants, and robustness of Spartan.}
		\label{fig:invariants}
		
	\end{center}
\end{figure}

\begin{description}
\item[Invariant 1: Random Samples.] Every node  has at least two valid CMLs corresponding to random committees at all times. 

We will now briefly describe the behavior of a node in the Spartan maintenance cycle (Algorithm~\ref{alg:maintainence} has more details). When a new node $u$ enters the network, it is seeded with the ID of an arbitrary node $v$. Node $u$ then requests $v$ and receives a CML corresponding to a random committee that it uses to move to a random committee $C$. As a new member of $C$, it receives $\Omega(\log \log n)$ valid CMLs and this will satisfy $u$'s requirement until the next sampling cycle. 

During each sampling cycle, each committee $C$ obtains  $\Theta(\log n \log \log n)$ CMLs. They  are apportioned amongst the $O(\log n)$ members such that each member gets $\Theta(\log \log n)$ samples. Moreover, a set of $\Theta(\log n \log \log n)$ samples are set aside for the up to $O(\log n)$ new nodes that may become members of $C$. 
\item[ Invariant 2: Random Committee Membership.] Every node $u$ is in a committee chosen uniformly and independently at random. Recall that $u$ receives a CML pertaining to a random committee $C$ in its first round. It then moves to $C$ in its second round. Subsequently, every few sampling cycles, it moves to a new (uniformly and independently chosen) random committee. 
\end{description}

Each new node $u$ that enters the network will in 2 rounds (whp)  become  a member  of some committee. Recall that the new node $u$ will be seeded with the ID of an arbitrary pre-existing node $v$ in the network. In the first (handshake) round, $u$ will request $v$ and obtain  $\cml(C)$ pertaining to a random committee $C$ from the node that it connected to (see Section~\ref{sec:model} for more detail). In the second round, it will connect to $C$ using $\cml(C)$.  

The node $u$ will then operate in cycles of $\Theta(\log  \log n)$ rounds (on expectation) by moving to a new random committee every few (constant number of) sampling cycles. This process ensures that $u$ becomes a member of a  random committee periodically and also ensures that the committees are periodically refreshed. This ensures that the $\Theta(\log \log n)$-late adversary is incapable of inferring the members within committees.
\begin{algorithm}
\caption{Node level protocol for maintaining Spartan.}\label{alg:maintainence}
\begin{algorithmic}[1]
\small
    \item[{\bf Note: }] For clarity, the protocol is described in an event-driven manner from the perspective of a single arbitrary node $u$.

	\item[{\bf New Arrival: }] When a new node $u$ enters the network, it uses its seed knowledge (ID of a pre-existing node $v$) to request and obtain the CML of a random committee $C$. Node $u$ then requests membership and joins $C$.
	\item[{\bf Sampling: }] Node $u$ participates in the sampling procedure that is repeatedly performed. Node $u$ then obtains $\Theta(\log \log n)$ valid samples that will sustain its requirements until the next sampling completes. 
	\item[{\bf Helping a new arrival: }] Whenever a new node $w$ requests, $u$ provides $w$ with the CML of a random committee.  
	
	\item[{\bf Moving to a new committee: }] \label{lno:moving} At the start of each sampling cycle, node $u$  moves to a new committee with some small constant probability $p \ll 1/2$ and stays put with probability $1-p$. The parameter $p$  can be adjusted for convenience. For example, small values of $p$ will limit the nodes from moving too frequently.

	The actual act of moving is straightforward. Suppose $u$ wishes to move to a committee $C$, it sends a ``join-request" message to each node in the $\cml(C)$ and those that are currently in $C$ (which will be at least $\Omega(\log n)$ in number) will accept $u$ into $C$.  (Node $u$ must also drop all its edges with its previous committee.) 
	
	\item[{\bf The node $u$ has been in a committee for too long.}] We do not want any node to stay in the same committee consecutively for more than a constant number of sampling cycles. Otherwise, the adversary may be able to infer some of the members within a committee based on its knowledge of the committee from $\Omega(\log \log n)$ rounds ago. Therefore, the node must override the probabilistic move and deterministically move if it has been in a committee for more than a sufficiently large constant (say, 10) number of sampling cycles. 
	
	\item[{\bf A fellow committee member leaves: }] Members of a committee may leave (without notice) either because they moved to a new committee or were just churned out. In either case, whenever node $u$ detects that a fellow member has left its committee, $u$ must update the local committee members list and also inform neighboring committees of the update. 
\item[{\bf Accepting new members: }] To facilitate the movement of a node $v$ to another committee , if $u$ receives a ``join-request" message from $v$ that is currently not in $u$'s committee, then $u$ must honor it and include $v$ into its committee. This may entail updating the local committee members list and informing neighboring committees periodically (at least once every $O(\log \log n)$ rounds) about the committee's updated member list.

\normalsize
\end{algorithmic}
\end{algorithm}

{\bf Analysis Overview.} We now wish to show that the maintenance procedure described above will work (i.e., invariants maintained) and that Spartan will remain robust for some sufficiently large $\poly(n)$ rounds. 

We begin by recalling a well-known claim with regards to the balls-into-bins model (see \cite{MU05,DP09}).
\begin{claim}\label{clm:bib}
Consider $m$ bins. For any fixed $c$, there exists a constant $c'$ such that when $c' m \log m$ balls are thrown uniformly and independently at random into $m$ bins, then, every bin will contain $\Theta(\log n)$ balls with probability at least $1 - n^{-c}$.  
\end{claim}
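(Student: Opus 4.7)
The plan is to analyze a single bin, concentrate its load via Chernoff bounds, and then union bound over all $m$ bins. Fix an arbitrary bin $i$ and let $X_i$ denote the number of balls falling into bin $i$. Since each of the $c'm\log m$ balls independently lands in bin $i$ with probability $1/m$, the random variable $X_i$ is a sum of independent Bernoulli$(1/m)$ indicators with expectation $\mu = c'\log m$.

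Next I would apply the standard multiplicative Chernoff bounds (both the upper and lower tail versions, e.g.\ from Mitzenmacher and Upfal~\cite{MU05}). For a suitable constant $\delta \in (0,1)$, both
\[
\prob{X_i \geq (1+\delta)\mu} \leq \exp\!\left(-\mu \delta^{2}/3\right)
\qquad \text{and} \qquad
\prob{X_i \leq (1-\delta)\mu} \leq \exp\!\left(-\mu \delta^{2}/2\right)
\]
hold. With $\mu = c'\log m$, each tail is bounded by $m^{-c' \delta^{2}/3}$. By choosing $c'$ large enough relative to $c$ (specifically $c' \geq 3(c+1)/\delta^{2}$), each tail is at most $m^{-(c+1)}$, so bin $i$ has load in $[(1-\delta)c'\log m,\ (1+\delta)c'\log m] = \Theta(\log m)$ except with probability at most $2m^{-(c+1)}$.

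Finally I would take a union bound across all $m$ bins: the probability that some bin deviates from the $\Theta(\log m)$ range is at most $2m \cdot m^{-(c+1)} = 2 m^{-c}$, which can be absorbed into $m^{-c}$ by enlarging $c'$ slightly (and interpreting the $n$ in the claim as the parameter $m$ or any polynomially related quantity, as is standard in this setting). Since all inequalities follow from textbook Chernoff tails applied to independent indicators, there is no real obstacle; the only care needed is matching constants so that a single choice of $c'$ simultaneously yields the desired $\Theta(\log m)$ concentration in each bin and the desired $m^{-c}$ overall failure probability.
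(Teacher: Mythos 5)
Your proof is correct and uses the standard Chernoff-plus-union-bound argument. The paper does not actually prove this claim -- it simply cites it as a well-known balls-into-bins fact from Mitzenmacher--Upfal and Dubhashi--Panconesi -- so your argument is precisely the textbook derivation the paper is implicitly invoking, and you correctly flag the small notational wrinkle that the claim mixes $m$ and $n$ (which is harmless here since $m = N \in \Theta(n/\log n)$, hence $\log m = \Theta(\log n)$ and $m^{-c}$ and $n^{-c}$ differ only by constants in the exponent).
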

In our context, the committees are the bins and nodes are the balls and the proof works by showing that every committee will contain $\Theta(\log n)$ nodes for a sufficiently long time. 

Formally, our analysis works by induction. Let us use $\ell = (1, 2, \ldots)$ to index the sampling cycles sequentially.  The first sampling cycle starts right after the bootstrap phase, so clearly Spartan is robust until the start of the first sampling cycle. Moreover, by the construction of Spartan, each node is in a committee that was chosen uniformly and independently at random. Thus, by immediate application of Claim~\ref{clm:bib}, Spartan is robust at the start of sampling cycle 1. This establishes our basis. 

\begin{lemma} \label{lem:robust}
If  Spartan is robust at the start of the $\ell$th sampling cycle, then, it will (whp) remain robust until the end of the cycle (equivalently, the start of $(\ell+1)$th cycle). Moreover, each node is placed in a committee that is chosen uniformly at random and independent of all other nodes' committee assignments. 
\end{lemma}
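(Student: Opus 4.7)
The plan is to carry out the induction step by tracking each committee's membership through the $\Theta(\log \log n)$ rounds of cycle $\ell$, showing that the churn cannot concentrate its damage on any single committee. I will assume the inductive hypothesis: at the start of cycle $\ell$, each node is in a uniformly and independently random committee, and (by Claim~\ref{clm:bib} applied with $n$ balls and $N = \Theta(n/\log n)$ bins) every committee has $\Theta(\log n)$ members whp. I also use that the sampling procedure of Theorem~\ref{thm:sampling} delivered valid CMLs to each node at the start of the cycle, so Invariant~1 holds for all pre-existing nodes.

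The core argument is that the $\Theta(\log \log n)$-late omniscient adversary's churn decisions during cycle $\ell$ can only depend on algorithmic randomness from before cycle $\ell-1$. Since the random committee re-assignments performed at the start of cycle $\ell-1$ (and any randomness consumed since) are hidden from the adversary, from its perspective the current committee membership map looks like a uniformly random assignment of nodes to committees. Thus for any fixed set of up to $\epsilon n$ nodes it chooses to delete over the cycle, the number of them landing in any specific committee $C$ is dominated by a sum of indicators that are negatively associated (balls-into-bins with hidden labels). A Chernoff bound on this sum, combined with a union bound over the $N$ committees and the $\Theta(\log \log n)$ rounds of the cycle, shows that in every intermediate round each committee loses at most an $\epsilon' < 1$ fraction of its $\Theta(\log n)$ members to churn; hence membership remains $\Omega(\log n)$ and, since each round removes only a small fraction, at least one common member survives across any two consecutive rounds. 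This establishes the committee-robustness clauses.

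For logical edges, note that by construction each surviving member of $C$ maintains overlay edges to every member of each neighboring committee $C'$ in the butterfly. The only way a logical edge fails is if the CMLs held by the two committees fall out of sync, but the maintenance protocol (``a fellow committee member leaves'' and ``accepting new members'' in Algorithm~\ref{alg:maintainence}) propagates updates to neighbors every $O(\log \log n)$ rounds, and since there is always at least one surviving member bridging consecutive rounds, the bipartite completion is rebuilt within $O(1)$ rounds of each change. Combined with the committee-robustness above, all $O(N)$ logical edges stay robust throughout the cycle.

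Finally, for the uniform-and-independent claim at the end of the cycle: a pre-existing node that does not move stays in its UAR committee; a node that moves (via line~\ref{lno:moving}) picks its new committee from a valid random CML obtained via Theorem~\ref{thm:sampling}, which is uniformly distributed over $\C$ and independent of the assignments of other nodes (the sampling tokens are generated independently); a new arrival joins a UAR committee via the CML supplied to it by its seed node, again drawn from the random-walk samples. Moreover, none of these choices use bits the adversary has seen, so independence across nodes is preserved.

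The main obstacle I expect is formalizing that the adversary's churn selections behave ``as if'' oblivious with respect to the current committee assignments, i.e., justifying the negative-association setup that drives the Chernoff step. This requires carefully pinning down what the $\Theta(\log \log n)$-late adversary's view contains at the moment it commits each deletion, and arguing that the specific node-to-committee map during cycle $\ell$ is determined by randomness that is still hidden from it; once this coupling is in place, the rest of the argument is a standard concentration-plus-union-bound exercise over the $N \cdot \Theta(\log \log n) = \poly(n)$ (committee, round) pairs in the cycle.
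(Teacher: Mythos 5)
Your proposal is correct and takes essentially the same route as the paper: both arguments hinge on the observation that the $\Theta(\log\log n)$-late adversary cannot see current committee assignments, so its churn is effectively oblivious with respect to them, after which a balls-into-bins concentration argument (your Chernoff/negative-association step, the paper's direct appeal to Claim~\ref{clm:bib}) closes the size bounds. The formalization you flag as the remaining obstacle is handled in the paper by the principle of deferred decisions: for any set of churned nodes, the choice of which committee each belonged to is deferred until the moment of its deletion, so the surviving $\Omega(n)$ nodes form a uniformly random subset to which Claim~\ref{clm:bib} applies directly, and symmetrically for new arrivals.
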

\begin{proof}

First, we note  that within sampling cycle $\ell$, no committee will become empty. Since the cycle duration is $\Theta(\log \log n)$ rounds, at most $\epsilon n$ nodes would have churned out. We must importantly note that the $\Theta(\log \log n)$-late adversary will be unaware of where the $\epsilon n$ nodes are, so the choice of nodes that are churned out are essentially uniformly at random. To formalize this, we can use the principle of deferred decisions. For any choice of $\epsilon n$ nodes, we can defer the choice of their committee to the round when they are churned out.  Consequently, the remaining nodes (of cardinality at least $n - \epsilon n \in \Omega(n)$) is also a uniformly random subset and sufficiently large, thereby allowing  us to apply Claim~\ref{clm:bib}. We can in fact claim that every committee will have at least $\Omega(\log n)$ nodes (whp). 

Secondly, every new node that entered the network during sampling cycle $\ell$ was placed in a random committee. Since the number of such new nodes is at most $\epsilon n$, again, by Claim~\ref{clm:bib}, we get the number of nodes within each committee to be $O(\log n)$ whp. 
\end{proof}

It is largely clear that all the rules of the model are followed during our construction. We now point out a few details that may not be immediately obvious. Firstly, each node forms overlay edges with all members of its current committee as well the members of the three neighboring committees. No other overlay links are formed. This means that no node violates the $\Delta \in O(\log n)$ degree restriction.  From the description, it is clear that we are never sending more than $O(\plog(n))$ messages of size at most $O(\plog(n))$ bits each. In addition, no node receives more than $O(\plog(n))$ messages either (whp) and this is clarified in several places. The general principle is that messages are either sent across the overlay links (thus obviously ensuring that  a node can receive at most $O(\log n)$ of such overlay messages) or messages are sent to random nodes via random sampling. Such messages sent to random nodes -- by applying the balls-into-bins claim (see Claim~\ref{clm:bib}) -- cannot disproportionately land on any one node. Thus, we can conclude with the following theorem. 

\begin{theorem} \label{thm:main}
Despite heavy churn at the rate of $[\epsilon n, O(\log \log n)]$ controlled by a $\Theta(\log \log n)$-late adversary whose actions are instantaneous, we  can, whp, bootstrap and maintain the Spartan overlay network without violating the rules of the model laid out in Section~\ref{sec:model}. Moreover, the Spartan network so constructed and maintained has the following properties. 
 \begin{enumerate}
	\item Spartan provides $O(n/\log n)$ addressable locations called committees arranged in a wrapped butterfly network.
	\item The diameter is $O(\log n)$ allowing any node in the network to reach another node in $O(\log n)$ rounds.
	\item Spartan is guaranteed to be robust for a time period that is an arbitrarily large $\poly(n)$.
\end{enumerate}
\end{theorem}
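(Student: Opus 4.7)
The plan is to assemble the theorem by chaining the two main phases together. First, for the bootstrap phase, I would invoke the lemmas of Section~\ref{sec:warmup} in sequence: leader election completes in $O(\log n)$ rounds, then Lemma~\ref{lem: tree} builds the $O(\log n)$-height binary spanning tree, Lemma~\ref{lem:committes} produces the ring of $N = k 2^k \in \Theta(n/\log n)$ committee leaders, Lemma~\ref{lem:butterfly} transforms this ring into the wrapped butterfly, and finally Lemma~\ref{lem:comsize} populates each committee with $\Theta(\log n)$ members, after which the $\ccg$ exchange of lines~\ref{lno:formccg} and~\ref{lno:formedges} creates the required intra- and inter-committee overlay edges. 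Each of these steps succeeds whp in $O(\log n)$ rounds and never violates the $\Delta \in O(\log n)$ degree bound or the $O(\plog(n))$ per-round message bounds, so a single union bound across the $O(1)$ phases establishes that Spartan is correctly built at round $B \in \Theta(\log n)$. Properties 1 and 2 of the theorem follow directly: the structure is a wrapped butterfly with $N$ committees, and since a butterfly of $N = k 2^k$ nodes has diameter $O(k) = O(\log n)$ while committees have $O(\log n)$ size (so any internal hop is $O(1)$), the overall diameter is $O(\log n)$.

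For the maintenance phase, I would argue by induction on the sampling cycle index $\ell = 1, 2, \ldots$. The base case $\ell = 1$ is immediate from the bootstrap, since Spartan is trivially robust at the moment it is created and each node was placed into its committee uniformly at random by line~\ref{lno:join}. For the inductive step, assuming Spartan is robust at the start of cycle $\ell$, I would invoke Theorem~\ref{thm:sampling} (with $\alpha = \log \log n$) to conclude that every committee obtains $\Theta(\log n \log \log n)$ valid uniform CML samples during cycle $\ell$, which suffices to supply each node with $\Theta(\log \log n)$ samples and to maintain Invariant~1. Given Invariant~1, the node-level protocol of Algorithm~\ref{alg:maintainence} lets every new or relocating node place itself uniformly at random into a committee, which is Invariant~2. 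Then Lemma~\ref{lem:robust} concludes that Spartan remains robust throughout cycle $\ell$ and that the committee assignments at the start of cycle $\ell+1$ are again uniform and independent, closing the induction.

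To upgrade the per-cycle whp guarantee to a $\poly(n)$-round guarantee (property 3), I would fix the target polynomial $n^c$ in advance and choose the constants hidden inside Theorem~\ref{thm:sampling}, Lemma~\ref{lem:robust}, and Claim~\ref{clm:bib} large enough that each cycle's failure probability is bounded by $n^{-(c+2)}$. Since the number of cycles in $n^c$ rounds is $O(n^c / \log \log n) \le n^c$, a union bound yields a total failure probability of at most $n^{-2}$, so Spartan is robust for the entire $n^c$-round interval whp. Along the way I would verify the model constraints: each node forms overlay edges only with its at most $O(\log n)$ own-committee peers and the $O(\log n)$ members of the three adjacent butterfly committees, respecting the degree bound; all inter-node traffic is either along these overlay links or is a random sample routed via the sampling procedure, and Claim~\ref{clm:bib} applied to both committee assignments and seed-request targets ensures no node is hit by more than $O(\plog(n))$ messages per round.

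The main obstacle I expect is the circularity in the inductive step: Theorem~\ref{thm:sampling} assumes Spartan is robust during its execution, while Lemma~\ref{lem:robust} uses the sampling output to preserve robustness. The resolution is to state the induction hypothesis carefully (Spartan robust through the end of cycle $\ell$, in particular during all $\Theta(\log \log n)$ rounds of cycle $\ell+1$'s sampling), and to observe that Lemma~\ref{lem:robust}'s proof already uses the principle of deferred decisions to neutralize the $\Theta(\log \log n)$-late adversary. A second subtlety is that the validity of a CML sample degrades over $O(\log \log n)$ rounds (so samples drawn in cycle $\ell$ must be consumed before cycle $\ell+2$), which I would handle by pointing out that the protocol consumes samples within the very next cycle, keeping them valid by the $\Omega(\log n)$ common-member property of Lemma~\ref{lem:robust}.
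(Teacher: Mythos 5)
Your proposal takes essentially the same route as the paper: chain the bootstrap lemmas (leader election, tree, ring, butterfly, committee population, $\cml$ exchange) to establish Spartan in $O(\log n)$ rounds, then induct on sampling cycles with Lemma~\ref{lem:robust} as the inductive step while Theorem~\ref{thm:sampling} supplies the samples needed for Invariants 1 and 2, closing with the degree/message-bound check and Claim~\ref{clm:bib}. The only place you go beyond the paper is in spelling out the union bound over $\poly(n)/\Theta(\log\log n)$ cycles and in explicitly flagging the sampling-robustness circularity and the CML shelf-life, which the paper leaves implicit but handles the same way.
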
 
\section{Experimental analysis}

While our theoretical results are largely insightful, for the sake of simplicity, we have not worked out all the constants. However, using the correct numbers is important when making key design choices like the size of the butterfly network (i.e., the number of committees N) required for robustness. Put another way, for a given number of committees $N = k2^k$, what would be the minimum number of nodes that we need in the network in order to ensure that all committees are populated with member nodes for a sufficiently long period of time?  Essentially, we have to ensure that the committees are, on average, assigned sufficiently many peers. The theory we have developed indicates $\Theta(\log n)$ peers per committee on average, which, assuming a conservative estimate of 10 for the constant and $n$ being 1024 (again, quite conservatively), will be $\approx 100$ nodes per committee. We believe this is a significant overestimate; smaller committees should suffice.  To obtain a more realistic estimate of the average number of nodes per committee, we performed some simple experimental analysis. 

We consider $n$ peers assigned randomly to  $N=k2^k$ committees at round 1 to simulate the idea of peers being distributed among committees. We define a committee to be robust up to round $r$ if it has been non-empty until round $r$. A bad event in this setting is when at least one committee becomes empty. To simulate churn, in every round after $r=1$, $\epsilon n$ peers (chosen UAR) are removed and $\epsilon n$ are placed  randomly into the committees. We repeat the above procedure until we either reach a previously defined stopping point of $R$ rounds (we use $R=10000$ rounds) or until we reach a round in which at least one committee is empty. If all committees remain robust for $R$ rounds, we declare that the repetition was successful.

We considered $N = k2^k$ committees (varying $k$ in the range $[5,10]$) and $\epsilon=.1$. For each value of $k$, we ran  $30$ repetitions of the experiment. For a given $k$, we define the \emph{threshold} $T$ to be the minimum number of peers  such that all $N=k2^k$ committees are robust through at least 90\% of all repetitions (i.e., at least 27 repetitions). Then from Table~\ref{tab:threshold}, we see that the number of successful repetitions drastically decreases even for $n= 0.9 \times T$. Thus, we need to be careful in choosing the number of committees.

\begin{table}[htpb]
\begin{tabular}{|l|l|l|l|l|}
\hline
\multirow{2}{*}{Number of committees $N=k2^k$} & \multirow{2}{*}{ Threshold $T$} 
& \multicolumn{3}{c|}{Number of failed repetitions} \\ \cline{3-5}
& & $T$  & $0.9T$   &  $0.8T$  \\ \hline
160 & 2880           & 0 & 10 & 28 \\ \hline
384  & 7680           & 0 & 10 & 27 \\ \hline
896    & 17920           & 0 & 11 & 30 \\ \hline
2048  & 40960       & 3 & 21 & 30 \\ \hline
4608  & 100000          & 3 & 18 & 30 \\ \hline
10240 & 250000         & 0 & 9  & 30 \\ \hline
\end{tabular}

\caption{$N=k2^k$ represents the number of committees. For each value of $N$, the Threshold ($T$) represents the number of peers such that 27 out of 30 repetitions were successful. Subsequent columns represent number of failed repetitions out of 30 for $n=T$, $n = .9T$, and $n = .8T$, respectively. As we can see, even for $n= 0.9 \times T$, there is a drastic decrease in the number of successful repetitions. }
\label{tab:threshold}
\end{table}
The analysis also allows one to discover the value of the threshold $T$ for different values of $N$. Let $c$ be the multiplicative factor such that threshold $T=cN$  for various values of $N$; this ensures that the average number of peers per committee is $c$.  From Figure~\ref{fig:constant} we observe that as $T$ increases, $c$ also increases, which is of course to be expected. However, one can immediately notice that the factor $c$ is in general much smaller than the size of the committees required for our theoretical analysis to go through.   
\begin{figure}[htbp]
	\begin{center}
		\includegraphics[scale=0.4]{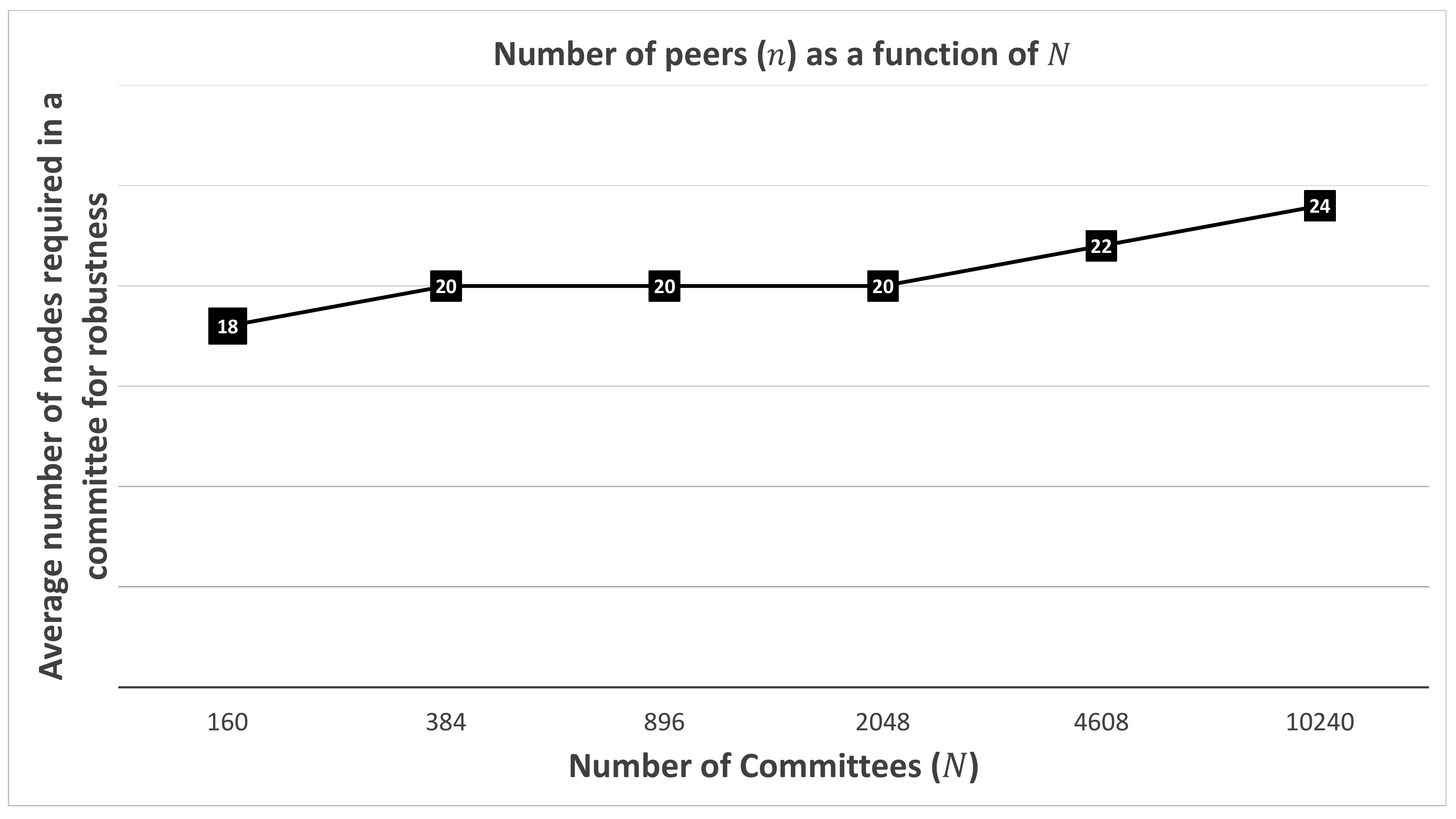}
		\caption{The average number of peers required for each committee in $N$ (for $5\leq k\leq10$) to guarantee robustness.}
		\label{fig:constant}
		
	\end{center}
\end{figure}

\section{Discussion and Concluding Remarks}
\label{sec:conc}
We have presented Spartan, a sparse overlay network that is robust against heavy churn designed by an adversary that is aware of the network except for the most recent  $O(\log \log n)$ rounds. The overlay provides us with $\Theta(n/\log n)$ addressable committees in which data items can be stored and maintained in a robust manner. Furthermore, each committee is easily accessible to all other committees (within $O(\log n)$ hops) because they are arranged in the form of a wrapped butterfly network. 

We believe that this basic framework is quite flexible and can be adapted and  extended in a variety of interesting ways to tackle some nagging issues in overlay network design. 

For a start, in our model, we have assumed that the number of nodes must lie in $[n, fn]$ for some $f \ge 1$. This is actually not a strict requirement. There are established techniques to count the number of nodes in the network~\cite{APRU12,AKS15}. When the count gets too close to either end of the range $[n, fn]$, we can appropriately halve $n$ or double it as required. This would mean that other parameters like $N$ and $k$ (the number of columns) must also be recalculated.  

We also have the flexibility to  string the committees together to form other structures like hypercubes, expander graph or de Bruijn graph.   Apart from using just a single type of graph, we could also design overlays that combine for example both a butterfly and a de Bruijn graph. Since both are low  degree graphs, we can inherit the benefits of both families of graphs without affecting our results.

In our model, we have to do a lot of work in order to keep the network ``warmed up" for all possible adversarial strategies. However, churn characteristics have been well studied and in particular, it is well known that despite heavy churn there will be a significant fraction of stable users who churn at a much lower rate~\cite{SR06}. We believe that the model could be enhanced to include a more careful accounting of the amount of work that is performed overall, which could in turn lead to more careful algorithm design that is more work efficient. In particular, in this context, it will be interesting to see if recent ideas of resource competitiveness surveyed by Bender et al.~\cite{ResComp15} in which resource expenditures are weighed against the adversary's efforts could lead to more nuanced algorithms more amenable to real-world implementation. 

Finally, Spartan is built with a wide range of applications in mind. It is a natural candidate to implement distributed hash tables capable of storing $\langle$key, value$\rangle$ pairs. The keys can easily be hashed into the space of committee identifiers. It will also be interesting to see if erasure codes can be used effectively in tandem with the idea of committees. Another feature is that Spartan is naturally resistant to attacks like censorship and denial-of-service. If a subset of the nodes is either censored or bombarded by denial-of-service attacks, the rest of the nodes will still be able to keep the committees operating. 

For future work, we wish to explore the possibility of reducing the degree down to a constant. This seems plausible given recent works like~\cite{APRRU15} where constant degree expanders were maintained despite churn, but it is not clear as yet whether a constant degree addressable and routable \Pe network is possible.
\section*{Acknowledgment}

    The authors would like to thank  Chetan Gupta, Christian Scheideler, and Eli Upfal for  useful discussions and ideas. We thank the anonymous reviewers for useful suggestions.

\bibliographystyle{unsrt}  




\end{document}